\documentclass[sigconf]{acmart}

% Remove the "ACM reference format" paragraph at the beginning
\settopmatter{printacmref=false}
\renewcommand\footnotetextcopyrightpermission[1]{} % removes footnote with conference information in first column
\pagestyle{plain} % removes running headers

\usepackage{booktabs} % For formal tables
\usepackage{amsmath}
\usepackage{amsthm}
\usepackage{amssymb}
\usepackage{graphicx}
\usepackage{algorithm}
\usepackage{algorithmicx}
\usepackage{algpseudocode}
\usepackage{subcaption}
\usepackage{url}
\usepackage{verbatim}

\setcopyright{none}
\acmDOI{}
\acmISBN{}

\theoremstyle{theorem}
\newtheorem{constraint}{Constraint}

\begin{document}
\title{Analysis of Global Fixed-Priority Scheduling for Generalized Sporadic DAG Tasks}

\author{Son Dinh, Christopher Gill, Kunal Agrawal}
\affiliation{%
  \institution{Washington University in Saint Louis, \\
  Department of Computer Science and Engineering}
}
\email{sonndinh, cdgill, kunal@wustl.edu}

\begin{abstract}
We consider global fixed-priority (G-FP) scheduling of parallel tasks, 
in which each task is represented as a directed acyclic graph (DAG). 
We summarize and highlight limitations of the state-of-the-art analyses for G-FP and 
propose a novel technique for bounding interfering workload, which can be applied directly to generalized DAG tasks. 
Our technique works by constructing optimization problems for which 
the optimal solution values serve as safe and tight upper bounds for interfering workloads. 
Using the proposed workload bounding technique, we derive a response-time analysis 
and show that it improves upon state-of-the-art analysis techniques for G-FP scheduling. 
\end{abstract}

\maketitle

\section{Introduction}
\label{sec:introduction}

With the prevalence of multiprocessor platforms and parallel programming languages and runtime systems 
such as OpenMP~\cite{openmp}, Cilk Plus~\cite{frigo1998implementation, cilkplus}, 
and Intel's Threading Building Blocks~\cite{tbb}, the demand for computer programs to be 
able to exploit the parallelism offered by modern hardware is inevitable. In recent 
years, the real-time systems research community has worked to address this trend for 
real-time applications that require parallel execution to satisfy their deadlines, such as real-time hybrid 
simulation of structures~\cite{ferry2014real}, and autonomous vehicles~\cite{kim2013parallel}.

Much effort has been made to develop analysis techniques and schedulability tests for scheduling 
parallel real-time tasks under scheduling algorithms such as Global Earliest Deadline First (G-EDF), 
and Global Deadline Monotonic (G-DM). However, schedulability analysis for parallel tasks is inherently 
more complex than for conventional sequential tasks. 
This is because \emph{intra-task parallelism} is allowed within individual tasks, which enables each individual 
task to execute simultaneously upon multiple processors. The parallelism of each task can also 
vary as it is executing, as it depends on the precedence constraints imposed on the task. 
Consequently, this raises questions of how to account for inter-task interference caused by other 
tasks on a task and intra-task interference among threads of the task itself.

In this paper, we consider task systems that consist of parallel tasks scheduled under Global Fixed-Priority 
(G-FP), in which each task is represented by a Directed Acyclic Graph (DAG). 
Our analysis is based on the concepts of \emph{critical interference} and 
\emph{critical chain}~\cite{chwa2013global, melani2015response, chwa2017global}, 
which allow the analysis to focus on a special 
chain of sequential segments of each task, and hence enable us to use techniques similar to the ones developed for 
sequential tasks~\cite{baker2003multiprocessor, bertogna2005improved, bertogna2007response, 
bertogna2009schedulability}.

The contributions of this paper are as follows:
\begin{itemize}
\item We summarize the state-of-the-art analyses for G-FP and highlight their limitations, specifically 
for the calculation of interference of carry-in jobs and carry-out jobs. 
\item We propose a new technique for computing upper-bounds on carry-out workloads, 
by transforming the problem into an optimization problem that can be solved by modern optimization solvers. 
\item We present a response-time analysis, using the workload bound computed with the new technique. 
Experimental results for randomly generated DAG tasks confirm that our technique dominates existing 
analyses for G-FP. 
\end{itemize}

The rest of this paper is organized as follows. In Sections~\ref{sec:related} and~\ref{sec:model} 
we discuss related work and present the task model we consider in this paper. 
Section~\ref{sec:background} reviews the concepts of \emph{critical interference} and 
\emph{critical chain} and discusses a general framework to bound response-time. 
Section~\ref{sec:sota} summarizes the most recent analyses of G-FP, and also highlights limitations of 
those analyses. In Section~\ref{sec:carryout} we propose a new technique to bound carry-out workload. 
A response-time analysis and a discussion of the complexity of our method are given in 
Section~\ref{sec:rta_schedtest}. Section~\ref{sec:evaluation} presents the evaluation of our method for 
randomly generated DAG tasks. We conclude our work in Section~\ref{sec:conclusion}.

\section{Related Work}
\label{sec:related}

For the sequential task model, Bertogna et al.~\cite{bertogna2007response} proposed a 
response-time analysis that works for G-EDF and G-FP. They bound the interference of 
a task in a problem window by the worst-case workload it can generate 
in that window. The worst-case workload is then bounded by considering a 
worst-case release pattern of the interfering task. This technique was later extended 
by others to analyze parallel tasks, as is done in this work. Bertogna et al.~\cite{bertogna2009schedulability} 
proposed a sufficient slack-based schedulability test for G-EDF and G-FP in which 
the slack values for the tasks are used in an iterative algorithm to improve the schedulability gradually. 
Later, Guan et al.~\cite{guan2009new} proposed a new response-time analysis for both 
constrained-deadline and arbitrary-deadline tasks.

Initially, simple parallel real-time task models were studied, such 
as the fork-join task model and the synchronous task model. Lakshmanan et al.~\cite{lakshmanan2010scheduling} 
presented a transformation algorithm to schedule fork-join tasks where all parallel segments of each task 
must have the same number of threads, which must be less than the number of processors. 
They also proved a resource augmentation bound of 3.42 for their algorithm. Saifullah et al.~\cite{saifullah2013multi} 
improved on that work by removing the restriction on the number of threads in parallel segments. 
They proposed a task decomposition algorithm and proved resource augmentation bounds for the 
algorithm under G-EDF and Partitioned Deadline Monotonic (P-DM) scheduling. 
Axer et al.~\cite{axer2013response} presented a response-time analysis for fork-join tasks under 
Partitioned Fixed-Priority (P-FP) scheduling. Chwa et al.~\cite{chwa2013global} developed an 
analysis for synchronous parallel tasks scheduled under G-EDF. They introduced the concept of 
critical interference and presented a sufficient test for G-EDF. Maia et al.~\cite{maia2014response} 
reused the concept of critical interference to introduce a response-time analysis for synchronous tasks 
scheduled under G-FP. 
A general parallel task model was presented by Baruah et al.~\cite{baruah2012generalized} in which 
each task is modeled as a Directed Acyclic Graph (DAG) and can have an arbitrary deadline. They presented 
a polynomial test and a pseudo-polynomial test for a DAG task scheduled with EDF and proved their 
speedup bounds. However, they only considered a single DAG task. Bonifaci et al.~\cite{bonifaci2013feasibility} 
later developed feasibility tests for task systems with multiple DAG tasks, scheduled under G-EDF and 
G-DM. 

Melani et al.~\cite{melani2015response} proposed a response-time analysis for conditional 
DAG tasks where each DAG can have conditional vertices. Their analysis 
utilizes the concepts of critical interference and critical chain, and works for both G-EDF and G-FP. However, 
the bounds for carry-in and carry-out workloads are likely to be overestimated since they ignore the 
internal structures of the tasks. Chwa et al.~\cite{chwa2017global} extended their work in~\cite{chwa2013global} for 
DAG tasks scheduled under G-EDF. They proposed a sufficient, workload-based schedulability test 
and improved it by exploiting slack values of the tasks. 
Fonseca et al.~\cite{fonseca2017improved} proposed a response-time 
analysis for sporadic DAG tasks scheduled under G-FP that improves upon the response-time analysis 
in~\cite{melani2015response}. They improve the upper bounds for interference by taking the DAGs of the 
tasks into consideration. In particular, by explicitly considering the DAGs the workloads generated by 
the carry-in and carry-out jobs can be reduced compared to the ones in~\cite{melani2015response}, 
and hence schedulability can be improved. The carry-in workload is bounded by considering a schedule 
for the carry-in job with unrestricted processors, in which subtasks execute as soon as they are ready and 
for their full WCETs. The carry-out workload is bounded for a less general type of DAG tasks, called 
nested fork-join DAGs. 
We discuss the state-of-the-art analyses for G-FP and differentiate our work in detail in Section~\ref{sec:sota}.

\section{System Model}
\label{sec:model}

\begin{figure}[ht]
\centering
\includegraphics[width=\linewidth]{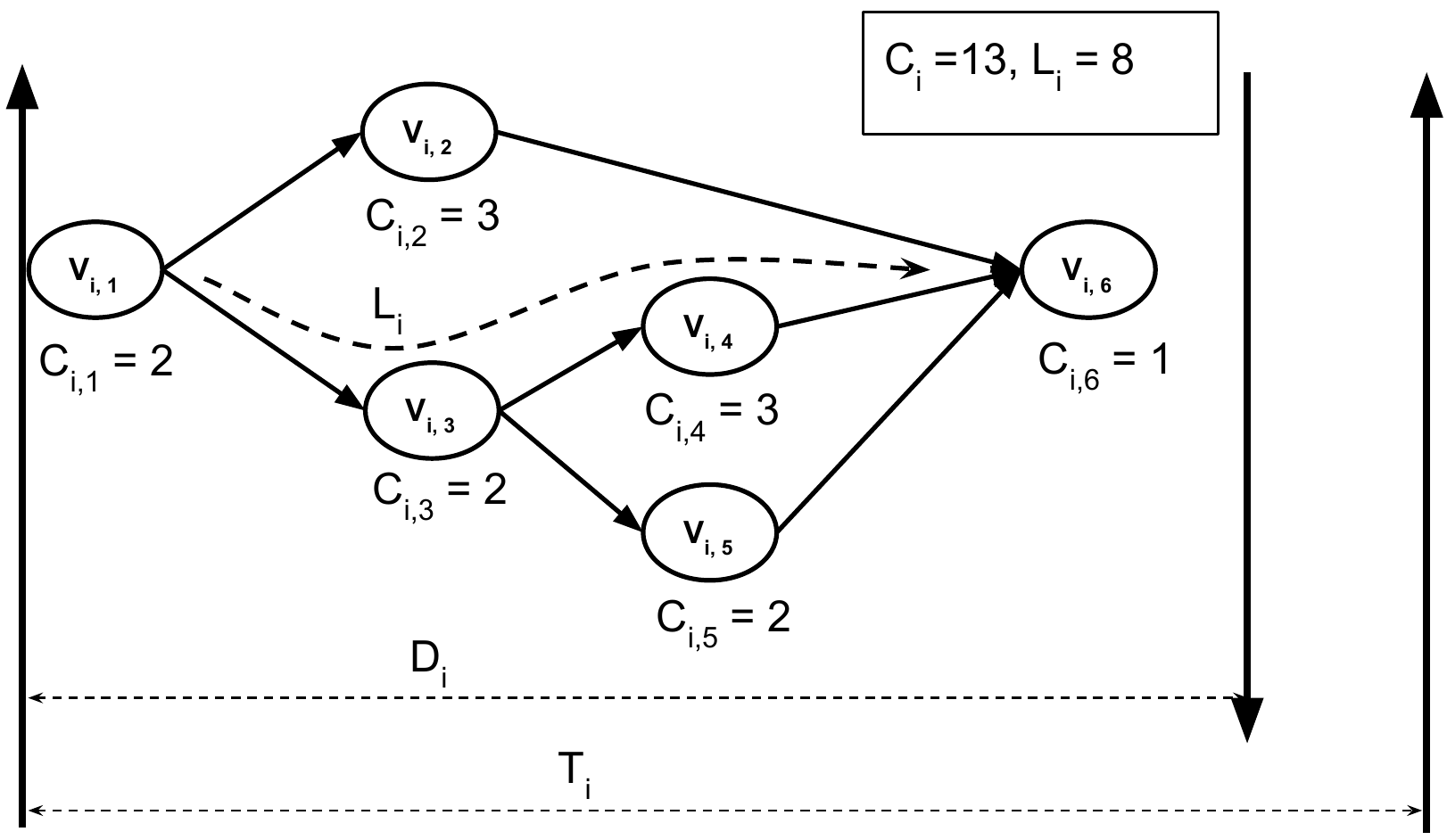}
\caption{An example DAG task.}
\label{fig:example_task}
\end{figure}

We consider a set $\tau$ of $n$ real-time parallel tasks, $\tau = \{\tau_1, \tau_2, ..., \tau_n\}$, 
scheduled preemptively by a global fixed-priority scheduling algorithm upon $m$ identical processors. 
Each task $\tau_i$ is a recurrent, sporadic 
process which may release an infinite sequence of jobs and is modeled by $\tau_i = \{G_i, D_i, T_i\}$, 
where $D_i$ denotes its relative deadline and $T_i$ denotes the minimum inter-arrival time of 
two consecutive jobs of $\tau_i$. We assume that all tasks have constrained deadlines, i.e., 
$D_i\leq T_i, \forall i\in [1, n]$. 
Each task $\tau_i$ is represented as a Directed Acyclic Graph (DAG) $G_i=(V_i, E_i)$, 
where $V_i=\{v_{i, 1}, v_{i, 2}, ..., v_{i, n_i}\}$ is the set of vertices of the DAG $G_i$ and 
$E_i\subseteq (V_i\times V_i)$ is the set of directed edges of $G_i$. 
In this paper, we also use \emph{subtasks} and \emph{nodes} to refer to the vertices of the tasks. 
Each subtask $v_{i, a}$ of $G_i$ represents a section of instructions that can only be run sequentially. 
A subtask $v_{i, a}$ is called a \emph{predecessor} of $v_{i, b}$ if there exists an edge from $v_{i, a}$ to 
$v_{i, b}$ in $G_i$, i.e., $(v_{i, a}, v_{i, b})\in G_i$. Subtask $v_{i, b}$ is then called a \emph{successor} of 
$v_{i, a}$. Each edge $(v_{i, a}, v_{i, b})$ represents a precedence constraint between the two subtasks. 
A subtask is \emph{ready} if all of its predecessors have finished. 
Whenever a task releases a job, all of its subtasks are released and have the 
same deadline as the job's deadline. We use $J_i$ to denote an arbitrary job of $\tau_i$ 
which has release time $r_i$ and absolute deadline $d_i$.

Each subtask $v_{i, a}$ has a worst-case execution time (WCET), denoted by $C_{i, a}$. 
The sum of WCETs of all subtasks of $\tau_i$ is the worst-case execution time of 
the whole task, and is denoted by $C_i = \sum_{v_{i, a}\in V_i} C_{i, a}$. The WCET of 
a task is also called its \emph{work}. A sequence of subtasks $(v_{i, u_1}, v_{i, u_2}, ..., v_{i, u_t})$ 
of $\tau_i$, in which $(v_{i, u_j}, v_{i, u_{j+1}})\in E_i, \forall 1\leq j\leq t-1$, 
is called a \emph{chain} of $\tau_i$ and is denoted by $\lambda_i$. The length of a chain $\lambda_i$ 
is the sum of the WCETs of subtasks in $\lambda_i$ and is denoted by $len(\lambda_i)$, i.e., 
$len(\lambda_i) = \sum_{v_{i, u_j}\in\lambda_i} C_{i, u_j}$. A chain of $\tau_i$ which has 
the longest length is a \emph{critical path} of the task. 
The length of a critical path of a DAG is called its \emph{critical path length} or 
\emph{span}, and is denoted by $L_i$. 
Figure~\ref{fig:example_task} illustrates 
an example DAG task $\tau_i$ with 6 subtasks, whose work and span are 
$C_i = 13$ and $L_i = 8$, respectively. In this paper, we consider tasks that 
are scheduled using a preemptive, global fixed-priority algorithm where each task is assigned a 
fixed task-level priority. All subtasks of a task have the same priority as the task. 
Without loss of generality, we assume that tasks have distinct priorities, 
and $\tau_i$ has higher priority than $\tau_k$ if $i < k$.

\section{Background}
\label{sec:background}

In this section we discuss the concept of critical interference that our work is based on, 
and present a general framework to bound response-times of DAG tasks scheduled under G-FP. 
In the next section, we summarize the state-of-the-art analyses for G-FP and 
give an overview of our method. 

\subsection{Critical Chain and Critical Interference}
\label{subsec:critical_interference}
The notions of \emph{critical chain} and \emph{critical interference} were introduced by 
Chwa et al.~\cite{chwa2017global, chwa2013global} for analyzing parallel tasks scheduled with G-EDF. 
Unlike sequential tasks, analysis of 
DAG tasks with internal parallelism is inherently more complicated: 
(i) some subtasks of a task can be interfered with by other subtasks of the same task 
(i.e., \emph{intra-task interference}); (ii) subtasks of a task can be interfered with by subtasks of 
higher-priority tasks (i.e., \emph{inter-task interference}); and (iii) the parallelism of a DAG task 
may vary during execution, subject to the precedence constraints imposed by its graph. 
The critical chain and critical interference concepts alleviate the complexity of the analysis 
by focusing on a special chain of subtasks of a task which accounts for its response time, 
thus bringing the problem closer to a more familiar 
analysis technique for sequential tasks. Although they were originally proposed for analysis of 
G-EDF~\cite{chwa2017global, chwa2013global}, these concepts are also useful for analyzing G-FP. 
We therefore use them in our analysis and include a discussion of them in this section.

Consider any job $J_k$ of a task $\tau_k$ and its corresponding schedule. 
A \emph{last-completing subtask} of $J_k$ is a subtask 
that completes last among all subtasks in the schedule of $J_k$. 
A \emph{last-completing predecessor} of a subtask $v_{k, a}$ is a predecessor 
that completes last among all predecessors of $v_{k, a}$ in the schedule of $J_k$. 
Note that a subtask can only be ready after a last-completing predecessor finishes, 
since only then are all the precedence constraints for the subtask satisfied. 
Starting from a last-completing subtask of $J_k$, we can recursively trace back through all 
last-completing predecessors until we reach a subtask with no predecessors. 
If during that process, a subtask has more than one last-completing predecessors, 
we arbitrarily pick one. The chain that is reconstructed by appending those 
last-completing predecessors and the last-completing subtask 
is called a \emph{critical chain} of job $J_k$. We call the subtasks that belong to a critical 
chain \emph{critical subtasks}. 

\begin{figure}[h]
\center
\includegraphics[width=\linewidth]{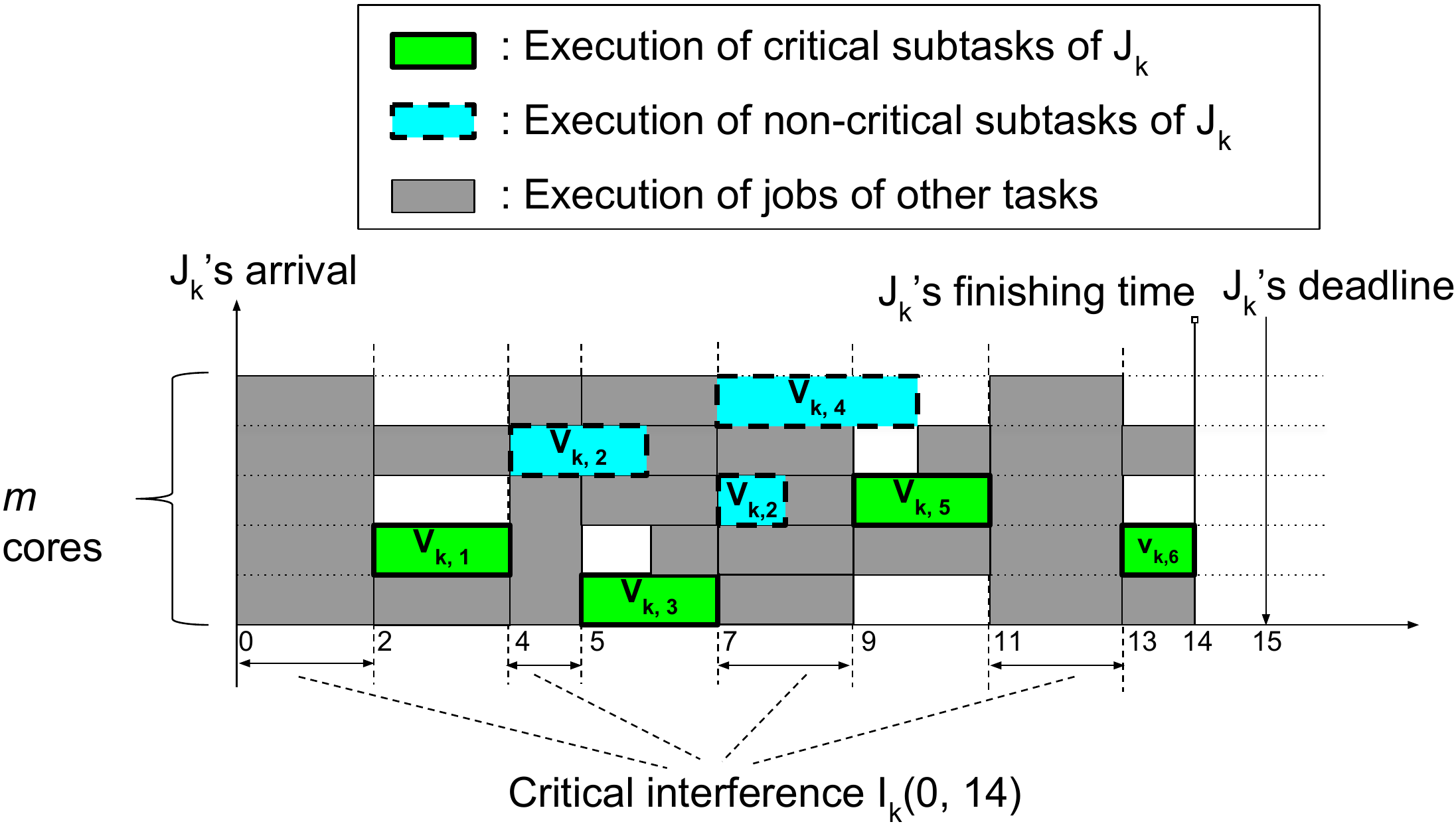}
\caption{Critical chain and critical interference of $J_k$.}
\label{fig:critical_chain}
\end{figure}

\begin{example} 
Figure~\ref{fig:critical_chain} presents an example of a critical chain of a job $J_k$ of 
task $\tau_k$, which has the same DAG as shown in Figure~\ref{fig:example_task}. In Figure~\ref{fig:critical_chain}, 
boxes with bold, solid borders denote the execution of critical subtasks of $J_k$; boxes with bold, 
dashed borders denote the execution of the other subtasks of $J_k$. The other boxes are 
for jobs of other tasks. Subtask $v_{k, 6}$ is a last-completing subtask. 
A last-completing predecessor of $v_{k, 6}$ is $v_{k, 5}$. 
Similarly, a last-completing predecessor of $v_{k, 5}$ is $v_{k, 3}$, and a last-completing 
predecessor of $v_{k, 3}$ is $v_{k, 1}$. Hence a critical chain of $J_k$ is 
$(v_{k, 1}, v_{k, 3}, v_{k, 5}, v_{k, 6})$. 
\end{example}

The critical chain concept has a few properties 
that make it useful for schedulability analysis of parallel DAG tasks. 
First, the first subtask of any critical chain of a job is 
ready to execute as soon as the job is released, since it does not have any predecessor. 
Second, when the last subtask of a critical chain completes, the corresponding job finishes --- 
this is from the construction of the critical chain. Thus the scheduling window of a critical chain 
of $J_k$ --- i.e., from the release time of its first subtask to the completion time of 
its last subtask --- is also the scheduling window of job $J_k$ --- i.e., from the job's release 
time to its completion time. 
Third, consider a critical chain $\lambda_k$ of $J_k$: 
at any instant during the scheduling window of $J_k$, 
either a critical subtask of $\lambda_k$ is executed \textit{or} a critical subtask of $\lambda_k$ 
is ready but not executed because all $m$ processors 
are busy executing subtasks not belonging to $\lambda_k$, 
including non-critical subtasks of job $J_k$ and subtasks from other tasks 
(see Figure~\ref{fig:critical_chain}). 
Therefore, the response-time of a critical chain of $J_k$ is also the response-time of $J_k$. 
Hence if we can upper-bound the response-time of a critical chain for any job $J_k$ of 
$\tau_k$, that bound also serves as an upper-bound for the response-time of $\tau_k$.

The third property of the critical chain suggests that we can partition the scheduling window of 
a job $J_k$ into two sets of intervals. One includes all intervals during which critical subtasks of 
$J_k$ are executed and the other includes all intervals during which a critical subtask of $J_k$ 
is ready but not executed. The total length of the intervals in the second set is called the 
\emph{critical interference} of $J_k$. We include definitions for critical interference and 
interference caused by an individual task on $\tau_k$ as follows. 
\begin{definition}
\label{defn:critical_interference}
\emph{Critical interference} $I_k(a, b)$ on a job $J_k$ of task $\tau_k$ is the aggregated \textbf{length} 
of all intervals in [a, b) during which a critical subtask of $J_k$ is ready but not executed. 
\end{definition}

\begin{definition}
\label{defn:critical_interfering_indi}
\emph{Critical interference} $I_{i, k}(a, b)$ on a job $J_k$ of task $\tau_k$ due to task $\tau_i$ is the aggregated 
\textbf{processor time} from all intervals in [a, b) during which one or more subtasks of $\tau_i$ are executed and 
a critical subtask of $J_k$ is ready but not executed. 
\end{definition}

In Figure~\ref{fig:critical_chain}, the critical interference $I_k(0, 14)$ of $J_k$ is the sum of the lengths of intervals 
$[0, 2)$, $[4, 5)$, $[7, 9)$, and $[11, 13)$ which is 7. The critical interference $I_{i, k}(0, 14)$ 
caused by a task $\tau_i$ is the total processor time of $\tau_i$ in those four intervals. Note 
that $\tau_i$ may execute simultaneously on multiple processors, and we must sum its processor time 
on all processors. From the definition of critical interference, we have: 
\begin{equation}
\label{eqn:critical_interference}
I_k(a, b) = \frac{1}{m} \sum\limits_{\tau_i\in\tau} I_{i, k}(a, b).
\end{equation}

\subsection{A General Method for Bounding Response-Time}
\label{subsec:general_method}

\begin{figure*}
\centering
\includegraphics[width=0.9\textwidth]{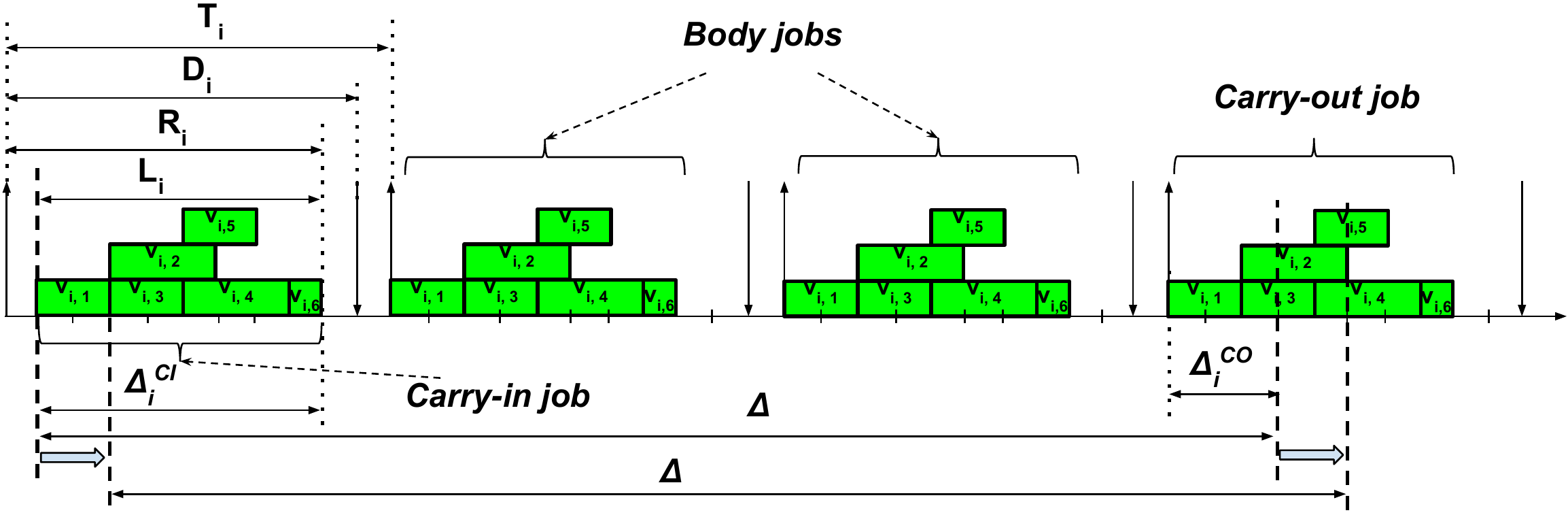}
\caption{Workload generated by an interfering task $\tau_i$ in an interval of length $\Delta$.}
\label{fig:workload}
\end{figure*}

We now discuss a general framework for bounding response-time in G-FP that is used in this work 
and was also employed by the state-of-the-art analyses~\cite{melani2015response, fonseca2017improved}. 
Based on the definitions of critical chain and critical interference, the response-time $R_k$ of $J_k$ is: 
$$R_k = len(\lambda_k) + I_k(r_k, r_k+R_k),$$
where $\lambda_k$ is a critical chain of $J_k$ and $len(\lambda_k)$ is its length 
(see Figure~\ref{fig:critical_chain} for example). 
Applying Equation~\ref{eqn:critical_interference} we have: 
\begin{multline}
\label{eqn:resptime}
R_k = \Big( len(\lambda_k) + \frac{1}{m} I_{k, k}(r_k, r_k+R_k) \Big) + 
\frac{1}{m} \sum\limits_{\tau_i\in hp(\tau_k)} I_{i, k}(r_k, r_k+R_k), 
\end{multline}
where $hp(\tau_k)$ is the set of tasks with higher priorities than $\tau_k$'s. 
Thus if we can bound the 
right-hand side of Equation~\ref{eqn:resptime}, we can bound the response-time of $\tau_k$. 
To do so, we bound the contributions to $J_k$'s response-time 
caused by subtasks of $J_k$ itself and by jobs of higher-priority tasks separately.

\subsubsection{Intra-Task Interference}
The sum $\Big( len(\lambda_k) + \frac{1}{m} I_{k, k}(r_k, r_k+R_k) \Big)$, which includes 
the intra-task interference on the critical chain of $J_k$ caused by non-critical subtasks of $J_k$, 
is bounded by Lemma V.3 in~\cite{melani2015response}. 
We include the bound below. 
\begin{lemma}
\label{lem:intra_interference}
The following inequality holds for any task $\tau_k$ scheduled by any work-conserving algorithm: \\
$$len(\lambda_k) + \frac{1}{m} I_{k, k}(r_k, r_k+R_k)\leq L_k + \frac{1}{m} (C_k - L_k)$$
\end{lemma}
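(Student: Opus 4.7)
The plan is to prove the bound in two moves: first show that $I_{k,k}(r_k, r_k+R_k)$ cannot exceed $C_k - len(\lambda_k)$, and then combine this with $len(\lambda_k) \leq L_k$ via a convexity-style manipulation to obtain the stated inequality.

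For the first move, I would rely on the definition of $I_{k,k}$ and on work-conservation. Fix a critical chain $\lambda_k$ of $J_k$. At any instant $t$ inside one of the intervals that contributes to $I_{k,k}$, some critical subtask of $\lambda_k$ is ready but not executing; by the chain's structure, exactly one critical subtask is ``active'' at $t$ (the others are either already completed or not yet ready because their last-completing predecessor on $\lambda_k$ has not finished). Therefore any processor time charged to $\tau_k$ during such an interval must come from a subtask of $J_k$ that is \emph{not} on $\lambda_k$. Summing over all such intervals, the total processor time contributed by $\tau_k$'s subtasks cannot exceed the total WCET of the non-critical subtasks of $J_k$, which is $C_k - len(\lambda_k)$. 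Hence $I_{k,k}(r_k, r_k+R_k) \leq C_k - len(\lambda_k)$.

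For the second move, I would plug this bound into the left-hand side:
\begin{equation*}
len(\lambda_k) + \frac{1}{m} I_{k,k}(r_k, r_k+R_k) \;\leq\; len(\lambda_k) + \frac{1}{m}\bigl(C_k - len(\lambda_k)\bigr) \;=\; \frac{m-1}{m}\,len(\lambda_k) + \frac{C_k}{m}.
\end{equation*}
Since the coefficient $(m-1)/m$ is nonnegative and $len(\lambda_k) \leq L_k$ by definition of the span, the right-hand side is maximized at $len(\lambda_k) = L_k$, giving $\frac{m-1}{m}L_k + \frac{C_k}{m} = L_k + \frac{1}{m}(C_k - L_k)$, which is the claimed bound.

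The only step that requires real care is the first one, namely justifying that during any interval contributing to $I_{k,k}$ only non-critical subtasks of $J_k$ can be running, and that each such subtask's cumulative contribution is bounded by its WCET. The former uses the fact that $\lambda_k$ is a single path along last-completing predecessors, so successive critical subtasks cannot run concurrently; the latter is immediate because each subtask executes for at most its WCET in total. The rest is algebraic bookkeeping and the trivial bound $len(\lambda_k) \leq L_k$.
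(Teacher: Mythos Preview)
Your argument is correct and is essentially the standard proof of this bound. Note, however, that the paper does not give its own proof of this lemma at all: it simply attributes the result to Lemma~V.3 of Melani et al.~\cite{melani2015response} and moves on. Your two-step argument---bounding $I_{k,k}$ by $C_k - len(\lambda_k)$ via the observation that only non-critical subtasks of $J_k$ can execute while a critical subtask is blocked, then absorbing the slack via $len(\lambda_k)\le L_k$ and the nonnegative coefficient $(m-1)/m$---is exactly the reasoning one finds in that reference, so there is nothing to contrast.

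One small point worth making explicit in your write-up: the inequality $I_{k,k}(r_k,r_k+R_k)\le C_k - len(\lambda_k)$ also relies on the constrained-deadline assumption, which guarantees that at most one job of $\tau_k$ is active in $[r_k,r_k+R_k)$ and hence that all processor time charged to $\tau_k$ in that window comes from $J_k$ itself rather than from an earlier or later job of $\tau_k$.
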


\subsubsection{Inter-Task Interference}
Now we need to bound the inter-task interference on the right-hand side of Equation~\ref{eqn:resptime}. 
Since the interference caused by a task in an interval is at most the workload generated 
by the task during that interval, we can bound $I_{i, k}(a, b),~\forall\tau_i\in hp(\tau_k)$ 
using the bound for the workload generated by $\tau_i$ in the interval $[a, b)$. 
Let $W_i(a, b)$ denote the maximum workload generated by $\tau_i$ in the interval $[a, b)$. 
Let $W_i(\Delta)$ denote the maximum workload generated by $\tau_i$ in any interval of 
length $\Delta$. The following inequality holds for any $\tau_i$: 
\begin{equation}
\label{eqn:workload_relation}
I_{i, k}(r_k, r_k+R_k)\leq W_i(r_k, r_k+R_k)\leq W_i(R_k).
\end{equation}
Let the \emph{problem window} be the interval of interest with length $\Delta$. 
The jobs of $\tau_i$ that may generate workload within the problem window are classified into three types: 
(i) A \emph{carry-in job} is released strictly before the problem window and has a deadline within it, 
(ii) A \emph{carry-out job} is released within the problem window and has its deadline strictly after it, 
and (iii) \emph{body jobs} have both release time and deadline within the problem window. 
Similar to analyses for sequential tasks (e.g., Bertogna et al.~\cite{bertogna2007response}), 
the maximum workload generated by $\tau_i$ in 
the problem window can be attained with a release pattern in which (i) jobs of $\tau_i$ are released 
as quickly as possible, meaning that the gap between any two consecutive releases is exactly 
the period $T_i$, (ii) the carry-in job finishes as late as its worst-case 
finishing time, and (iii) the body jobs and the carry-out job start executing as soon as 
they are released. 
Figure~\ref{fig:workload} shows an example of such a job-release pattern of an interfering 
task $\tau_i$ with the DAG structure shown in Figure~\ref{fig:example_task}. 

However, unlike sequential tasks, analysis for parallel DAG tasks is more challenging in 
two aspects. First, it is not obvious which schedule for the subtasks of the carry-in (carry-out) job 
would generate maximum carry-in (carry-out) workload. This is because the parallelism 
of a DAG task can vary depending on its internal graph structure. Second, for the same reason, aligning 
the problem window's start time with the start time of the carry-in job of $\tau_i$ may not correspond to the 
maximum workload generated by $\tau_i$. For instance, in Figure~\ref{fig:workload} if we shift 
the problem window to the right 2 time units, the carry-in 
job's workload loses 2 time units but the carry-out job's workload gains 5 time units. 
The total workload thus increases 3 time units. Therefore in order to compute the 
maximum workload generated by $\tau_i$ we must slide the problem window to find 
a position that corresponds to the maximum sum of the carry-in workload and carry-out workload. 
We discuss an existing method for computing carry-in workload in Section~\ref{sec:sota} and 
our technique for computing carry-out workload in Section~\ref{sec:carryout}. 
In Section~\ref{sec:rta_schedtest}, we combine those two bounds in a 
response-time analysis and explain how we slide problem windows to compute maximum workloads. 

We note that the maximum workload generated by each body job does not depend 
on the schedule of its subtasks and is simply its total work. 
Furthermore, regardless of the position of the problem window, the workload contributed by 
the body jobs, denoted by $W_i^{BO}(\Delta)$, is bounded as follows. 
\begin{lemma}
The workload generated by the body jobs of task $\tau_i$ in a problem window with length $\Delta$ 
is upper-bounded by 
$$W_i^{BO}(\Delta) = \max \Big\{ \big(\Bigl\lfloor \frac{\Delta-L_i+R_i}{T_i} \Bigr\rfloor - 1\big) C_i,  0\Big\}. $$
\end{lemma}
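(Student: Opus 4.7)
The plan is to count the maximum number of body jobs that can fit in the problem window under the worst-case release pattern of $\tau_i$ described in Section~\ref{sec:background}, and then multiply that count by $C_i$ since every body job has both its release and deadline inside the window and therefore can contribute its full WCET.

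First I would fix the problem window as $[0, \Delta)$ and parameterize the worst-case arrangement by the carry-in release time $t_c < 0$. In this pattern, consecutive releases of $\tau_i$ occur exactly $T_i$ apart, so job $k$ is released at $t_c + kT_i$, with $k=0$ being the carry-in. Because the carry-in finishes at time $t_c + R_i$ (its worst-case finishing time) and the critical path of length $L_i$ must be the last portion of its execution, that critical path occupies $[t_c + R_i - L_i, t_c + R_i]$.

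Second, I would pin down the position of $t_c$ used to derive the bound. To maximize the carry-in's contribution while keeping its critical path entirely inside the window (any earlier placement pushes part of the critical path out of $[0,\Delta)$ and loses workload), set $t_c = L_i - R_i$, so that the critical path begins exactly at $0$ and ends at $L_i$. Under this alignment, the subsequent releases $t_c + kT_i$ for $k \ge 1$ that fall in the window satisfy $k \le (\Delta - t_c)/T_i = (\Delta + R_i - L_i)/T_i$, giving $\bigl\lfloor (\Delta + R_i - L_i)/T_i\bigr\rfloor$ releases in $[0,\Delta)$. The last of these jobs is precisely the carry-out (its deadline exceeds $\Delta$ by the constrained-deadline assumption combined with this alignment), so subtracting one yields the body count
$$\bigl\lfloor (\Delta + R_i - L_i)/T_i\bigr\rfloor - 1.$$
Multiplying by $C_i$ (the full contribution of each body job) delivers the formula, and wrapping the expression in $\max\{\cdot, 0\}$ handles small $\Delta$ for which the parenthesized quantity would be negative.

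The main obstacle is justifying rigorously that the chosen position $t_c = L_i - R_i$ indeed upper-bounds the body workload across \emph{all} sliding positions of the window. Alternative placements (e.g.\ $t_c$ closer to $0$ or closer to $-T_i$) can shift workload between the carry-in and the body/carry-out contributions, so I would need to argue via a case analysis on $t_c$ that any gain in the number of body jobs in another arrangement is compensated by losses elsewhere, or equivalently that the formula is an upper bound because it corresponds to the placement that maximizes the carry-in workload with the largest feasible body count. The $\max\{\cdot,0\}$ clause and the floor-function edge cases (where $(\Delta + R_i - L_i)/T_i$ is near an integer) are routine to verify once this structural argument is in place.
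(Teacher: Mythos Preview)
Your proposal is correct and follows essentially the same approach as the paper: align the start of the problem window with the start of the carry-in job's (hypothetical) execution, count the releases that fall inside, subtract one for the carry-out, and multiply by $C_i$. The paper dispatches your ``main obstacle'' in a single sentence---shifting the window left or right may redistribute workload between the carry-in and carry-out jobs but cannot increase the number of body jobs---so your anticipated case analysis is more careful than what the paper actually provides; one minor slip is that the interval $[t_c+R_i-L_i,\,t_c+R_i]$ is occupied by the entire unrestricted-processor schedule of the carry-in job (whose makespan is $L_i$), not just its critical path.
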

\begin{proof}
Consider the case where the start of the problem window is aligned with the starting time of 
the carry-in job, as shown in Figure~\ref{fig:workload}. The number of body jobs is at most 
$\max\big\{\Bigl\lfloor \frac{\Delta-L_i+R_i}{T_i} \Bigr\rfloor - 1, 0\big\}$. 
Thus for this case the workload 
of the body jobs is at most $\max \big\{ \big(\Bigl\lfloor \frac{\Delta-L_i+R_i}{T_i} \Bigr\rfloor - 1\big) C_i, 0\big\}$. 

Shifting the problem window to the left or right can change the workload contributed by the 
carry-in and carry-out jobs but does not increase the maximum number of body jobs or their 
workload. The bound thus follows. 
\end{proof}

Let the \emph{carry-in window} and \emph{carry-out window} be the intervals within the problem 
window during which the carry-in job and the carry-out job are executed, respectively. 
Intuitively, the carry-in window spans from the start of the problem window to the completion time
of the carry-in job; the carry-out window spans from the starting time of the carry-out job to the 
end of the problem window. We denote the lengths of the carry-in window and carry-out window 
for task $\tau_i$ by $\Delta_i^{CI}$ and $\Delta_i^{CO}$ respectively. 
The sum of $\Delta_i^{CI}$ and $\Delta_i^{CO}$ is: 
\begin{equation}
\label{eqn:ci_co_length} 
\Delta_i^{CI} + \Delta_i^{CO} = L_i + (\Delta-L_i+R_i)\mod T_i
\end{equation}
Let $W_i^{CI}(\Delta_i^{CI})$ be the maximum carry-in workload of $\tau_i$ for a carry-in window 
of length $\Delta_i^{CI}$. Similarly, let $W_i^{CO}(\Delta_i^{CO})$ be the maximum carry-out workload 
of $\tau_i$ for a carry-out window of length $\Delta_i^{CO}$. 
The maximum workload generated by $\tau_i$ in any problem window of length $\Delta$ can be 
computed by taking the maximum over all $\Delta_i^{CI}$ and $\Delta_i^{CO}$ that satisfy 
Equation~\ref{eqn:ci_co_length}: 
\begin{equation}
\label{eqn:max_workload}
W_i(\Delta) = W_i^{BO}(\Delta) + \max\limits_{\Delta_i^{CI}, \Delta_i^{CO} \text{satisfy Eq.~\ref{eqn:ci_co_length}}} 
\Big\{ W_i^{CI}(\Delta_i^{CI}) + W_i^{CO}(\Delta_i^{CO}) \Big\}.
\end{equation}
Therefore if we can bound $W_i^{CI}(\Delta_i^{CI})$ and $W_i^{CO}(\Delta_i^{CO})$, we can 
bound the inter-task interference of $\tau_i$ on $\tau_k$ and thus the response-time of $\tau_k$.

\section{The State-of-the-Art Analysis for G-FP}
\label{sec:sota}

\begin{figure}[h!]
\centering
\includegraphics[width=\linewidth]{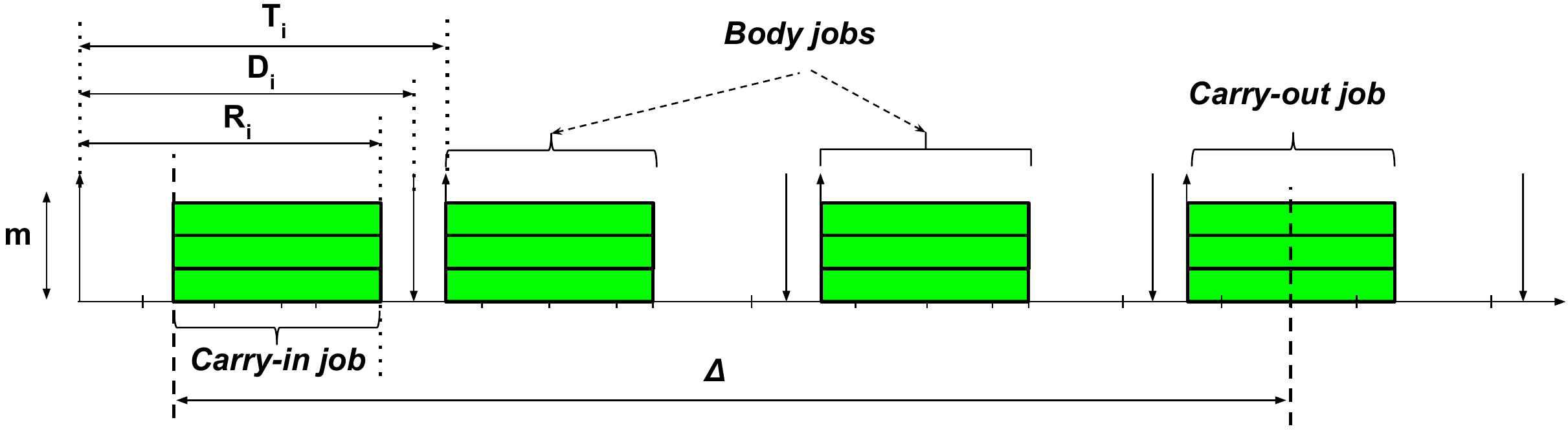}
\caption{Workload generated by an interfering task $\tau_i$ in Melani et al.~\cite{melani2015response}.}
\label{fig:melani}
\end{figure}

Melani et al.~\cite{melani2015response} proposed a response-time analysis for G-FP scheduling 
of conditional DAG tasks that may contain conditional vertices, for modeling conditional constructs 
such as \texttt{if-then-else} statements. 
They bounded the interfering workload by assuming that jobs of the interfering task execute perfectly 
in parallel on all $m$ processors. Their bound for the interfering workload is computed as follows. 
$$W_i(\Delta) = \Big\lfloor\frac{\Delta+R_i-C_i/m}{T_i} \Big\rfloor C_i + 
\min\big\{ C_i, m((\Delta+R_i-C_i/m)\mod T_i) \big\}.$$
Figure~\ref{fig:melani} illustrates the workload computation for an interfering task $\tau_i$ given 
in~\cite{melani2015response}. As shown in this figure, both carry-in and carry-out jobs are assumed to 
execute with perfect parallelism upon $m$ processors. Thus their workload contributions in the considered 
window are maximized. This assumption simplifies the workload computation as it ignores the 
internal DAG structures of the interfering tasks. However, assuming that DAG tasks have such 
abundant parallelism is likely unrealistic and thus makes the analysis pessimistic. 

Fonseca et al.~\cite{fonseca2017improved} later considered a task model similar to the one in this paper 
and proposed a method to improve the bounds for carry-in and carry-out 
workloads by explicitly considering the DAGs. The carry-in workload was bounded using a 
\emph{hypothetical schedule} for the carry-in job, in which the carry-in job can use as many processors 
as it needs to fully exploit its parallelism. They proved that the carry-in workload of the hypothetical 
schedule is maximized when: (i) the hypothetical schedule's completion time is aligned with the worst-case 
completion time of the interfering task, (ii) every subtask in the hypothetical schedule starts executing 
as soon as all of its predecessors finish, and (iii) every subtask in the hypothetical 
schedule executes for its full WCET. Figure~\ref{fig:workload} shows the hypothetical schedule 
of the carry-in job for the task in Figure~\ref{fig:example_task}. In this paper, we adopt their method for 
computing carry-in workload. In particular, the carry-in workload of task $\tau_i$ with a carry-in window of 
length $\Delta_i^{CI}$, i.e., from the start of the problem window to the completion time of the carry-in job 
(see Figure~\ref{fig:workload}), is computed as follows.
\begin{equation}
\label{eqn:carryin}
W_i^{CI}(\Delta_i^{CI}) = \sum_{v_{i, k}\in V_i}\max\big\{ C_{i, k} - \max(L_i - S_{i, k} - \Delta_i^{CI}, 0), 0 \big\}.
\end{equation}
In Equation~\ref{eqn:carryin}, $S_{i, k}$ is the start time of subtask $v_{i, k}$ in the hypothetical schedule for 
the carry-in job described above. It can be computed by taking a longest path among all paths from source 
subtasks to $v_{i, k}$ and adding up the WCETs of the subtasks along that path excluding $v_{i, k}$ itself. 

\begin{figure}[t!]
\centering
\begin{subfigure}{0.7\linewidth}
  \centering
  \includegraphics[width=\linewidth]{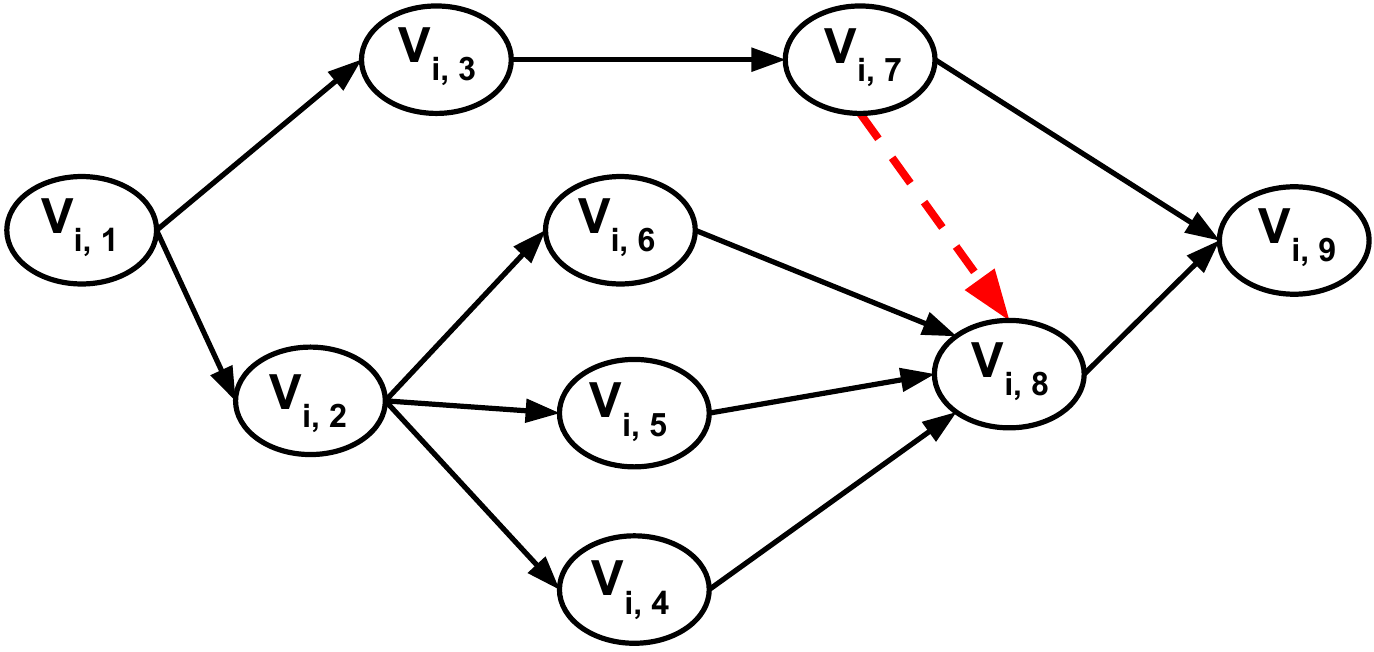}
  \caption{A non-nested-fork-join DAG task.}
  \label{fig:nfjdag_conflict}
\end{subfigure}
\hfill
\begin{subfigure}{.7\linewidth}
  \centering
  \includegraphics[width=\linewidth]{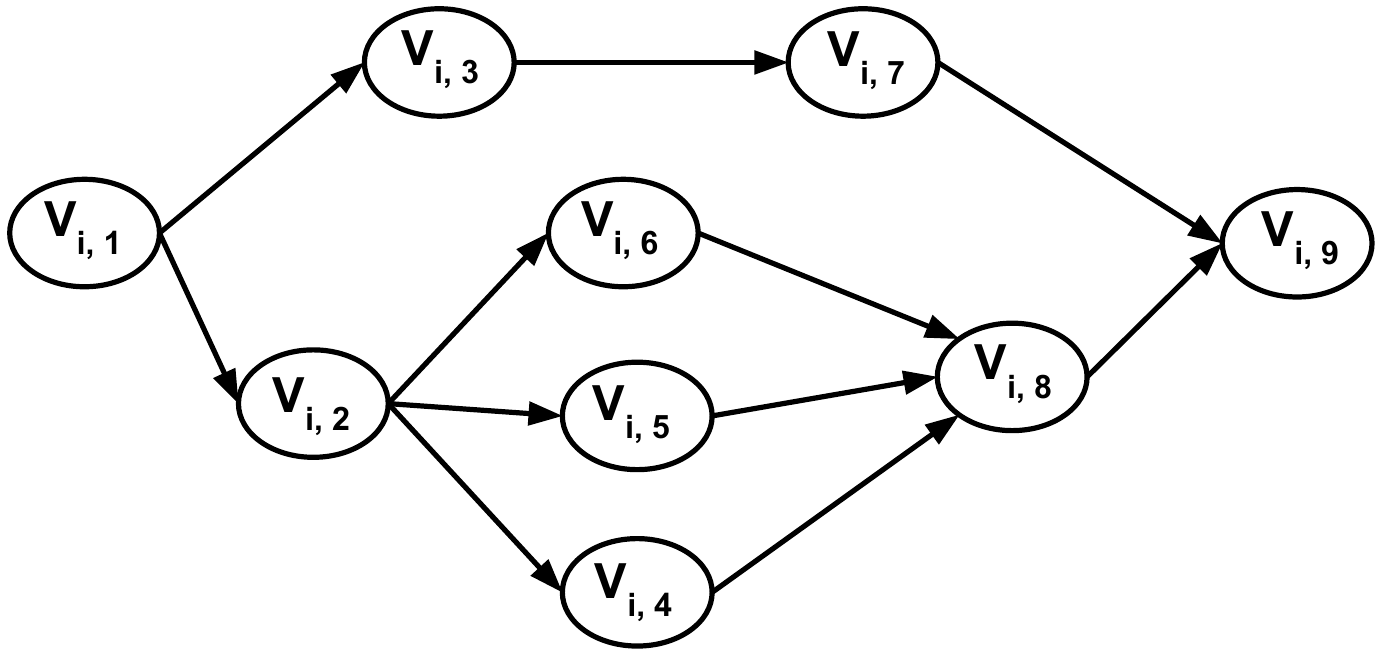}
  \caption{A nested fork-join DAG task.}
  \label{fig:nfjdag}
\end{subfigure}
\caption{Example for a general DAG task and a nested fork-join DAG task.}
\label{fig:nfjdagtask}
\end{figure}

For the carry-out workload,~\cite{fonseca2017improved} considered a subset of generalized DAG tasks, 
namely nested fork-join DAG (NFJ-DAG) tasks. A NFJ-DAG is constructed recursively from smaller NFJ-DAGs 
using two operations: \emph{series composition} and \emph{parallel composition}. Figure~\ref{fig:nfjdag} 
shows an example NFJ-DAG task. Figure~\ref{fig:nfjdag_conflict} shows a similar DAG with one more 
edge $(v_{i, 7}, v_{i, 8})$. The DAG in Figure~\ref{fig:nfjdag_conflict} is not a NFJ-DAG due to a single 
cross edge $(v_{i, 7}, v_{i, 8})$. To deal with a non NFJ-DAG,~\cite{fonseca2017improved} first transforms 
the original DAG to a NFJ-DAG by removing the conflicting edges, such as $(v_{i, 7}, v_{i, 8})$ in 
Figure~\ref{fig:nfjdagtask}. Then they compute the upper-bound for the carry-out workload using the 
obtained NFJ-DAG. The computed bound is proved to be an upper-bound for the carry-out workload. 
We note that the transformation removes some precedence constraints from the original DAG, and thus 
the resulting NFJ-DAG may have higher parallelism than the original DAG. Hence, computing the carry-out 
workload of a generalized DAG task via its transformed NFG-DAG may be pessimistic, especially for 
a complex DAG, as the transformation may remove many edges from the original DAG.

In this paper, we propose a new technique to directly compute an upper-bound for the carry-out workload of 
generalized DAG task. The high level idea is to frame the problem of finding the bound as 
an optimization problem, which can be solved effectively by solvers such as 
the CPLEX~\cite{cplex}, Gurobi~\cite{gurobi}, or SCIP~\cite{scip}. 
The solution of the optimization problem then serves as a safe and tight upper-bound for the 
carry-out workload. In the next section we present our method in detail.

\section{Bound for Carry-Out Workload}
\label{sec:carryout}

\begin{figure}[h]
\centering
\begin{subfigure}{0.5\linewidth}
  \centering
  \includegraphics[width=\linewidth]{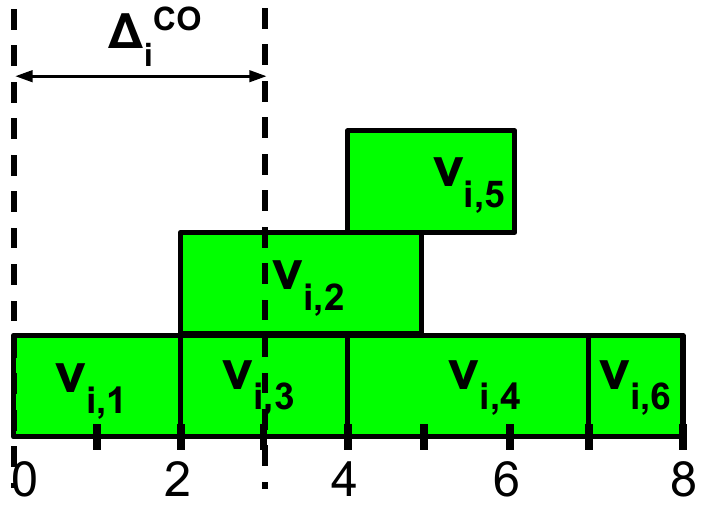}
  \caption{Carry-out workload when all subtasks execute for WCETs.}
  \label{fig:carryout_wcet}
\end{subfigure}
\hfill
\begin{subfigure}{0.4\linewidth}
  \centering
  \includegraphics[width=\linewidth]{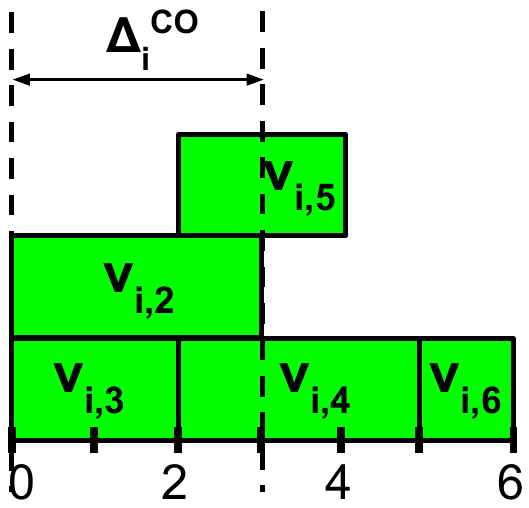}
  \caption{Carry-out workload when subtasks may execute less than WCETs.}
  \label{fig:carryout_nonwcet}
\end{subfigure}

\caption{An illustration of generating the maximum carry-out workload.}
\label{fig:carryout}
\end{figure}

In this section we propose a method to bound the carry-out workload that can be 
generated by a job of task $\tau_i$ by constructing an \textbf{integer linear program} (ILP) 
for which the optimal solution value is an upper-bound of the carry-out workload. 

Consider a carry-out job of task $\tau_i$, which is scheduled with an unrestricted number of processors, 
meaning that it can use as many processors as it requires to fully exploit its parallelism. 
Each subtask of the carry-out job executes as soon as it is ready, i.e., immediately after all of its predecessors 
have finished. 
We label such a schedule for the carry-out job $\mathcal{SCHE}^{CO}(\tau_i)$. 
We prove in the following lemma that the workload generated by $\mathcal{SCHE}^{CO}(\tau_i)$ is 
an upper-bound for the carry-out workload.
\begin{lemma}
\label{lem:asap}
For specific values of the execution times for the subtasks of $\tau_i$, workload generated by 
$\mathcal{SCHE}^{CO}(\tau_i)$ in a carry-out window of length $\Delta_i^{CO}$ is an upper-bound 
for the carry-out workload generated by $\tau_i$ with the given subtasks's execution times. 
\end{lemma}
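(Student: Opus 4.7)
The plan is to show that $\mathcal{SCHE}^{CO}(\tau_i)$ is, in a precise sense, the \emph{earliest} possible execution of every subtask, and then argue that earlier execution can only push more work into the carry-out window (which begins at the job's release time). Fix the subtask execution times $c_{i,a}$ once and for all, and let $\mathcal{S}$ denote any other feasible schedule of the carry-out job that respects the precedence constraints of $G_i$ (including any schedule actually produced by the $m$-processor global scheduler). Let $s_{i,a}$ and $s_{i,a}^{*}$ denote the start time of subtask $v_{i,a}$ in $\mathcal{S}$ and $\mathcal{SCHE}^{CO}(\tau_i)$ respectively, measured from the release time of the carry-out job, and let $f_{i,a}$, $f_{i,a}^{*}$ denote the corresponding completion times.

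First I would prove the key inequality $s_{i,a}^{*} \le s_{i,a}$ for every subtask $v_{i,a}$, by induction on a topological order of $G_i$. For a source subtask both values are $0$ in $\mathcal{SCHE}^{CO}$ (no predecessors, no processor contention) while $s_{i,a} \ge 0$ in $\mathcal{S}$. For the inductive step, in $\mathcal{SCHE}^{CO}$ the ASAP rule gives $s_{i,a}^{*} = \max_{v_{i,b}\in pred(v_{i,a})} f_{i,b}^{*}$, whereas in $\mathcal{S}$ precedence forces $s_{i,a} \ge \max_{v_{i,b}\in pred(v_{i,a})} f_{i,b}$; since the execution times are identical in the two schedules, the inductive hypothesis $s_{i,b}^{*}\le s_{i,b}$ implies $f_{i,b}^{*}\le f_{i,b}$, and the inequality propagates.

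Next I would quantify the contribution of each subtask to the workload inside the carry-out window $[0,\Delta_i^{CO})$. Since a subtask $v_{i,a}$ that starts at time $s$ with execution time $c$ contributes exactly
\[
\phi(s) \;=\; \min\bigl\{c,\ \max(\Delta_i^{CO}-s,\ 0)\bigr\}
\]
units of processor time to the window, and $\phi$ is nonincreasing in $s$, the inequality $s_{i,a}^{*}\le s_{i,a}$ yields $\phi(s_{i,a}^{*}) \ge \phi(s_{i,a})$ subtaskwise. Summing over all $v_{i,a}\in V_i$ gives the claimed bound. (In $\mathcal{SCHE}^{CO}$ a subtask may execute in parallel with itself on no processors, but only the total processor time matters; the unrestricted-processor assumption is what lets us actually realize every ASAP start time without violating feasibility.)

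The substantive step is the topological induction; everything else is bookkeeping. The main subtlety I would flag is making sure that the comparison schedule $\mathcal{S}$ is allowed to be any feasible schedule that the $m$-processor G-FP scheduler can produce (including ones where the carry-out job is preempted or delayed by other tasks), and checking that such a schedule still satisfies $s_{i,a}\ge \max_{v_{i,b}\in pred(v_{i,a})} f_{i,b}$ - this is true simply because any feasible schedule must respect precedence constraints. The hypothesis that execution times are fixed is what makes the argument clean; without it one would have to compare $\phi$ across different values of $c$, which is exactly the issue the subsequent ILP is designed to handle.
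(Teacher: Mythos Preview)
Your proof is correct and follows essentially the same idea as the paper: both arguments rest on the observation that the ASAP, unrestricted-processor schedule $\mathcal{SCHE}^{CO}(\tau_i)$ minimizes every subtask's start time, and that an earlier start time can only increase a subtask's workload contribution to the carry-out window. The paper packages this as a short contradiction argument (take the first subtask in time order that contributes strictly more workload in the competing schedule and derive a precedence violation), whereas you spell it out directly via topological induction and the monotonicity of $\phi$; the underlying content is the same, with your version being somewhat more explicit.
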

\begin{proof}
We prove by contradiction. Consider a schedule $\mathcal{SCHE}^*$ for the carry-out job in which 
subtasks execute for the same lengths as in $\mathcal{SCHE}^{CO}(\tau_i)$. Suppose 
subtask $v_{i, k}$ is the first subtask in time order that produces more workload in $\mathcal{SCHE}^*$ 
than it does in $\mathcal{SCHE}^{CO}(\tau_i)$. This means $v_{i, k}$ must have started executing earlier in 
$\mathcal{SCHE}^*$ than it have in $\mathcal{SCHE}^{CO}(\tau_i)$. Hence, $v_{i, k}$ must have 
started its execution before all of its predecessors have finished in $\mathcal{SCHE}^*$. This is 
impossible and the lemma follows. 
\end{proof}

Unlike the carry-in workload, the carry-out workload generated when all subtasks execute for their full WCETs 
is not guaranteed to be the maximum. 
Consider an interfering task $\tau_i$ shown in Figure~\ref{fig:example_task} and a carry-out window of length 
3 time units. If all subtasks of the carry-out job of $\tau_i$ execute for their WCETs, the carry-out workload 
would be 4 time units, as shown in Figure~\ref{fig:carryout_wcet}. 
However, if subtask $v_{i, 1}$ finishes immediately, i.e., 
executes for 0 time units, the carry-out workload would be 7 time units, as shown in 
Figure~\ref{fig:carryout_nonwcet}. From Lemma~\ref{lem:asap} and the discussion above, to compute an 
upper-bound for carry-out workload we must consider all possible execution times of the subtasks and 
subtasks must execute as soon as they are ready. 

For each subtask $v_{i, a}$ of the carry-out job of an interfering task $\tau_i$, we define two non-negative integer 
variables $X_{i, a}\geq 0 $ and $W_{i, a}\geq 0$. $X_{i, a}$ represents the actual execution time of subtask 
$v_{i, a}$ in the carry-out job and $W_{i, a}$ denotes the contribution of subtask $v_{i, a}$ to 
the carry-out workload. Let $\Delta^{CO}$ be an integer constant denoting the length of the 
carry-out window. Then the carry-out workload is the sum of the contributions of all subtasks in 
$\mathcal{SCHE}^{CO}(\tau_i)$, which is upper-bounded by the maximum of the following 
\emph{optimization objective function}: 
\begin{equation}
\label{eqn:objective}
obj(\tau_i, \Delta^{CO}) \triangleq \sum\limits_{v_{i, a}\in V_i} W_{i, a}.
\end{equation}

The optimal value for the above objective function gives the actual maximum workload generated by the 
carry-out job with unrestricted number of processors. 
We now construct a set of constraints on the contribution of each subtask in 
$\mathcal{SCHE}^{CO}(\tau_i)$ to the carry-out workload. 
From the definitions of $X_{i, a}$ and $W_{i, a}$, we have the following bounds for them. 
\begin{constraint}
\label{con:exetime}
For any interfering task $\tau_i$: \\
$$\forall v_{i, a}\in V_i: 0\leq X_{i, a}\leq C_{i, a}.$$
\end{constraint}

\begin{constraint}
\label{con:workload}
For any interfering task $\tau_i$: \\
$$\forall v_{i, a}\in V_i: 0\leq W_{i, a}\leq X_{i, a}.$$
\end{constraint}

These two constraints come from the fact that the actual execution time of subtask $v_{i, a}$ 
cannot exceed its WCET, and each subtask can contribute at most its whole execution time to 
the carry-out workload. Let $S_{i, a}$ be the starting time of $v_{i, a}$ in $\mathcal{SCHE}^{CO}(\tau_i)$ 
assuming that the carry-out job starts at time instant 0. For simplicity of exposition, we assume 
that the DAG $G_i$ has exactly one source vertex and one sink vertex. If this is not the case, we 
can always add a couple of dummy vertices, $v_{i, source}$ and $v_{i, sink}$, with zero WCETs for 
source and sink vertices, respectively. Then we add edges from $v_{i, source}$ to all vertices with 
no predecessors in the original DAG $G_i$, and edges from all vertices with no successors in $G_i$ 
to $v_{i, sink}$. Without loss of generality, we assume that $v_{i, 1}$ and $v_{i, n_i}$ are the source 
vertex and sink vertex of $G_i$, respectively. 
Let $\sigma^p_{i, a}$ denote a path from the source 
$v_{i, 1}$ to $v_{i, a}$: $\sigma^p_{i, a}\triangleq (v_{i, j_1}, ..., v_{i, j_p})$, where $j_1 = 1$, 
$j_p = a$, and $(v_{i, j_x}, v_{i, j_{x+1}})$ is an edge in $G_i$ $\forall 1\leq x < p$. 
Let $\mathcal{P}(v_{i, a})$ denote the set of all paths from $v_{i, 1}$ to $v_{i, a}$ in $G_i$: 
$\mathcal{P}(v_{i, a})\triangleq \{ \sigma^p_{i, a} \}$. 
$\mathcal{P}(v_{i, a})$ for all subtasks can be constructed by a graph traversal algorithm. 
For instance, a simple modification of depth-first search would accomplish this.

For a particular path $\sigma^p_{i, a}$, the sum of execution times of all subtasks in this path, 
excluding $v_{i, a}$ is called the \emph{distance} to $v_{i, a}$ with respect to this path. 
We let $D^p_{i, a}$ be a variable denoting the distance to $v_{i, a}$ in path $\sigma^p_{i, a}$. 
We impose the following two straightforward constraints on $D^p_{i, a}$ based on its definition. 
\begin{constraint} 
\label{con:dist_smaller}
For any interfering task $\tau_i$: \\
$$\forall v_{i, a}\in V_i, \forall\sigma^p_{i, a}\in \mathcal{P}(v_{i, a}): 
D^p_{i, a} \leq \sum\limits_{v_{i, j_x}\in\{\sigma^p_{i, a}\setminus v_{i, a}\}} X_{i, j_x}.$$ 
\end{constraint}

\begin{constraint}
\label{con:dist_larger}
For any interfering task $\tau_i$: \\
$$\forall v_{i, a}\in V_i, \forall\sigma^p_{i, a}\in \mathcal{P}(v_{i, a}): 
D^p_{i, a} \geq \sum\limits_{v_{i, j_x}\in\{\sigma^p_{i, a}\setminus v_{i, a}\}} X_{i, j_x}.$$
\end{constraint}

In the schedule $\mathcal{SCHE}^{CO}(\tau_i)$, the starting time $S_{i, a}$ of a subtask $v_{i, a}$ 
cannot be smaller than the distance to $v_{i, a}$ in any path $\sigma^p_{i, a}$. We prove this 
as follows. 
\begin{lemma}
\label{lem:starting_time1}
In the schedule $\mathcal{SCHE}^{CO}(\tau_i)$ of any interfering task $\tau_i$:\\
$$\forall v_{i, a}\in V_i, \forall \sigma^p_{i, a}\in \mathcal{P}(v_{i, a}): 
S_{i, a} \geq D^p_{i, a}.$$
\end{lemma}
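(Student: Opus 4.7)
The plan is to prove the inequality by induction along the path $\sigma^p_{i,a}$, exploiting the defining property of $\mathcal{SCHE}^{CO}(\tau_i)$ that every subtask starts executing as soon as all of its predecessors have completed.

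First I would fix an arbitrary subtask $v_{i,a}$ and an arbitrary path $\sigma^p_{i,a} = (v_{i,j_1}, v_{i,j_2}, \ldots, v_{i,j_p})$ in $\mathcal{P}(v_{i,a})$, where $j_1 = 1$ (the source) and $j_p = a$. I would then state and prove the following stronger claim by induction on the path position $x \in \{1, \ldots, p\}$: the start time of $v_{i,j_x}$ in $\mathcal{SCHE}^{CO}(\tau_i)$ satisfies
\[
S_{i,j_x} \;\geq\; \sum_{y=1}^{x-1} X_{i,j_y}.
\]
For the base case $x = 1$, the source vertex $v_{i,1}$ has no predecessors, so in $\mathcal{SCHE}^{CO}(\tau_i)$ it starts at time $0$; the right-hand side is an empty sum, also $0$, so the inequality holds.

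For the inductive step, assume the claim for position $x$. Because $(v_{i,j_x}, v_{i,j_{x+1}})$ is an edge of $G_i$, the subtask $v_{i,j_x}$ is a predecessor of $v_{i,j_{x+1}}$. In $\mathcal{SCHE}^{CO}(\tau_i)$ the subtask $v_{i,j_{x+1}}$ can only start once all of its predecessors have finished, so in particular
\[
S_{i,j_{x+1}} \;\geq\; S_{i,j_x} + X_{i,j_x}.
\]
Combining this with the inductive hypothesis yields $S_{i,j_{x+1}} \geq \sum_{y=1}^{x} X_{i,j_y}$, completing the induction. Setting $x = p$ gives $S_{i,a} \geq \sum_{y=1}^{p-1} X_{i,j_y}$, and by Constraints~\ref{con:dist_smaller} and~\ref{con:dist_larger} this sum equals $D^p_{i,a}$, which establishes the lemma.

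The argument is essentially routine, so there is no real obstacle; the only care needed is in tying the induction cleanly to the definition of $\mathcal{SCHE}^{CO}(\tau_i)$ (subtasks execute as soon as ready) and in making explicit that $D^p_{i,a}$ is exactly the cumulative sum $\sum_{y=1}^{p-1} X_{i,j_y}$ that arises from the induction, rather than merely bounded by it.
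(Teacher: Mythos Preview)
Your proof is correct. The paper argues by contradiction instead: assuming $S_{i,a} < D^{p*}_{i,a}$ for some path $\sigma^{p*}_{i,a}$, it asserts that some subtask on that path must still be executing at time $S_{i,a}$, contradicting the readiness of $v_{i,a}$. Your inductive argument is more direct and in fact more rigorous --- the paper's contradiction step (``there must be a subtask \ldots\ executing at time $S_{i,a}$'') implicitly leans on exactly the chain inequality $S_{i,j_{x+1}} \geq S_{i,j_x} + X_{i,j_x}$ that you make explicit and prove. One minor stylistic point: invoking Constraints~\ref{con:dist_smaller} and~\ref{con:dist_larger} to identify $\sum_{y=1}^{p-1} X_{i,j_y}$ with $D^p_{i,a}$ is slightly roundabout, since $D^p_{i,a}$ is \emph{defined} to be that sum and the constraints merely encode the definition for the ILP; you can simply cite the definition directly.
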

\begin{proof}
We prove by contradiction. Let $\sigma^{p*}_{i, a}$ be a path so that the starting time 
$S_{i, a}$ is smaller than $D^{p*}_{i, a}$. Subtask $v_{i, a}$ must be ready to start execution, meaning 
all of its predecessors must finish, at time $S_{i, a}$. Since 
$S_{i, a} < D^{p*}_{i, a}$, there must be a subtask $v_{i, j_x}\in\{\sigma^{p*}_{i, a}\setminus v_{i, a}\}$ 
executing (and thus not finished) at time $S_{i, a}$. Then $v_{i, a}$ cannot be ready at time 
$S_{i, a}$ since it depends on $v_{i, j_x}$. This contradicts the assumption that $v_{i, a}$ is ready 
at $S_{i, a}$ and the lemma follows. 
\end{proof}

In fact, in the schedule $\mathcal{SCHE}^{CO}(\tau_i)$ the starting time $S_{i, a}$ of $v_{i, a}$ 
is equal to the longest distance among all paths to it. 
\begin{lemma}
\label{lem:starting_time2}
In the schedule $\mathcal{SCHE}^{CO}(\tau_i)$ of any interfering task $\tau_i$:\\
$$\forall v_{i, a}\in V_i: 
S_{i, a} = \max\limits_{\sigma^p_{i, a}\in\mathcal{P}(v_{i, a})} D^p_{i, a}.$$ 
\end{lemma}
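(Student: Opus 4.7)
My plan is to prove the equality by establishing both inequalities, leveraging Lemma~\ref{lem:starting_time1} for one direction and a topological induction for the other. Specifically, Lemma~\ref{lem:starting_time1} already gives $S_{i,a} \geq D^p_{i,a}$ for every path $\sigma^p_{i,a} \in \mathcal{P}(v_{i,a})$, which immediately yields $S_{i,a} \geq \max_{\sigma^p_{i,a}} D^p_{i,a}$. Hence the entire substance of the lemma is the reverse inequality $S_{i,a} \leq \max_{\sigma^p_{i,a}} D^p_{i,a}$.

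To prove the reverse inequality, I would induct on a topological ordering of the vertices of $G_i$. The base case is the source $v_{i,1}$: since the carry-out job is assumed to begin at time $0$ and $v_{i,1}$ has no predecessors, it is ready immediately and $S_{i,1} = 0$, which matches the only path in $\mathcal{P}(v_{i,1})$, namely the trivial one-vertex path whose distance (summed over the path minus $v_{i,1}$ itself) is $0$. For the inductive step on a non-source vertex $v_{i,a}$, I would use the defining property of $\mathcal{SCHE}^{CO}(\tau_i)$: because processors are unrestricted and each subtask starts as soon as it becomes ready, the start time equals the latest completion time among its predecessors, that is, $S_{i,a} = \max_{v_{i,b} \in pred(v_{i,a})} (S_{i,b} + X_{i,b})$.

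Applying the induction hypothesis to each predecessor $v_{i,b}$ gives $S_{i,b} = \max_{\sigma^q_{i,b} \in \mathcal{P}(v_{i,b})} D^q_{i,b}$. The final step is the key bookkeeping observation: the paths to $v_{i,a}$ are exactly the paths to some predecessor $v_{i,b}$ extended by the edge $(v_{i,b}, v_{i,a})$, and under the convention of Constraints~\ref{con:dist_smaller} and~\ref{con:dist_larger} the distance of such an extended path to $v_{i,a}$ equals $D^q_{i,b} + X_{i,b}$ (the predecessor distance plus $v_{i,b}$'s own execution time, since $v_{i,b}$ is no longer the last vertex of the path). Taking the maximum over predecessors and over paths to each predecessor then yields
\[
S_{i,a} = \max_{v_{i,b} \in pred(v_{i,a})} \Big( \max_{\sigma^q_{i,b} \in \mathcal{P}(v_{i,b})} D^q_{i,b} + X_{i,b} \Big) = \max_{\sigma^p_{i,a} \in \mathcal{P}(v_{i,a})} D^p_{i,a},
\]
completing the induction.

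I do not anticipate a serious obstacle; the only delicate point is making precise the correspondence between paths to $v_{i,a}$ and (predecessor $v_{i,b}$, path to $v_{i,b}$) pairs, so that the shift in which vertex is excluded from the distance sum is handled correctly. Care is also needed to justify the ASAP identity $S_{i,a} = \max_{v_{i,b} \in pred(v_{i,a})}(S_{i,b} + X_{i,b})$ directly from the definition of $\mathcal{SCHE}^{CO}(\tau_i)$, which is exactly what the unrestricted-processor and as-soon-as-ready stipulations provide.
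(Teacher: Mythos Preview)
Your proof is correct, but it takes a different route from the paper. The paper's argument is constructive rather than inductive: starting from $v_{i,a}$, it traces back through last-completing predecessors (exactly the same construction used for critical chains in Section~\ref{subsec:critical_interference}) until reaching the source, obtaining a concrete path $\sigma^{p*}_{i,a}$. Because in $\mathcal{SCHE}^{CO}(\tau_i)$ each subtask on this chain begins the instant its last-completing predecessor finishes, the start time of $v_{i,a}$ equals the sum of execution times along $\sigma^{p*}_{i,a}$ excluding $v_{i,a}$, i.e., $S_{i,a}=D^{p*}_{i,a}$. Lemma~\ref{lem:starting_time1} then forces $\sigma^{p*}_{i,a}$ to be a maximizing path, giving the equality in one stroke.

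Your approach instead proves the identity by structural (topological) induction, reducing the statement at $v_{i,a}$ to the statement at its predecessors via the ASAP relation $S_{i,a}=\max_{v_{i,b}\in pred(v_{i,a})}(S_{i,b}+X_{i,b})$ together with the bijection between paths to $v_{i,a}$ and pairs (predecessor $v_{i,b}$, path to $v_{i,b}$). This is perfectly valid and arguably more systematic; it also makes explicit the recursive structure that underlies longest-path computations in DAGs. The paper's version is shorter and exhibits a witness path directly, which ties the result back to the critical-chain intuition used elsewhere in the paper. A minor stylistic note: your inductive step actually establishes the full equality, so the preliminary appeal to Lemma~\ref{lem:starting_time1} for the $\geq$ direction is not strictly needed.
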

\begin{proof}
Consider a path $\sigma^{p*}_{i, a}$ constructed as follows. First we take a last-completing 
predecessor of $v_{i, a}$, say $v_{i, j_x}$. Since $v_{i, a}$ executes as soon as it is ready, 
it executes immediately after $v_{i, j_x}$ finishes. We recursively trace back through the last-completing 
predecessors in that way until we reach the source vertex $v_{i, 1}$. Path $\sigma^{p*}_{i, a}$ is then 
constructed by chaining the last-completing predecessors together with $v_{i, a}$. 
We note that any subtask $v_{i, j_x}$ in $\sigma^{p*}_{i, a}$ executes as soon as its immediately 
preceding subtask finishes, since no other predecessors of $v_{i, j_x}$ finish later than it does. 
Therefore, $S_{i, a} = D^{p*}_{i, a}$. From Lemma~\ref{lem:starting_time1}, $\sigma^{p*}_{i, a}$ 
must have the longest distance to $v_{i, a}$ among all paths in $\mathcal{P}(v_{i, a})$. 
Thus the lemma follows. 
\end{proof}

Based on Lemmas~\ref{lem:starting_time1} and~\ref{lem:starting_time2}, we have the following constraint 
for the starting time of $v_{i, a}$. 
\begin{constraint}
\label{con:starttime}
For any interfering task $\tau_i$:\\
$$\forall v_{i, a}\in V_i, \forall\sigma^p_{i, a}\in\mathcal{P}(v_{i, a}): 
S_{i, a} \geq D^p_{i, a}.$$
\end{constraint}
\begin{proof}
We prove that this constraint requires that $S_{i, a}$ of every subtask $v_{i, a}$ for which 
$\max_{\sigma^p_{i, a}\in\mathcal{P}(v_{i, a})} D^p_{i, a} < \Delta^{CO}$ 
satisfies Lemma~\ref{lem:starting_time2}, that is 
$S_{i, a} = \max_{\sigma^p_{i, a}\in\mathcal{P}(v_{i, a})} D^p_{i, a}$. 
(Recall that $\Delta^{CO}$ is a constant denoting the carry-out window's length.) 
In other words, we prove that 
it requires that every subtask $v_{i, a}$, which would start executing within the carry-out window in an 
unrestricted-processor schedule $\mathcal{SCHE}^{CO}(\tau_i)$, gets exactly the same starting 
time from the solution to the optimization problem. 
Let $\mathcal{Q}_i$ denote the collection of such subtasks 
--- the ones that would start executing within the carry-out window in $\mathcal{SCHE}^{CO}(\tau_i)$. 

Let $\pi^*$ be the solution to the optimization problem and $S^*_{i, a}$ be the corresponding 
value for the starting time of any subtask $v_{i, a}\in\mathcal{Q}_i$ in the solution $\pi^*$. Obviously 
$S^*_{i, a} \geq \max_{\sigma^p_{i, a}\in\mathcal{P}(v_{i, a})} D^p_{i, a}$ for any $v_{i, a}$ 
since any solution to the optimization problem satisfies this constraint. 
If $S^*_{i, a} = \max_{\sigma^p_{i, a}\in\mathcal{P}(v_{i, a})} D^p_{i, a}$ for any 
$v_{i, a}\in\mathcal{Q}_i$, then we are done. Suppose instead that 
$S^*_{i, a} = \max_{\sigma^p_{i, a}\in\mathcal{P}(v_{i, a})} D^p_{i, a} + \epsilon_{i, a}$, 
$\epsilon_{i, a} > 0$ for some $v_{i, a}\in\mathcal{Q}_i$. Let $\mathcal{Q'}_i$ denote 
the set of such subtasks. 
We construct a solution $\pi'$ to the optimization problem from $\pi^*$ as follows. 
Consider a first subtask $v_{i, a}\in\mathcal{Q'}_i$ in time. We reduce its starting time 
by $\epsilon_{i, a}$: $S'_{i, a} = S^*_{i, a} - \epsilon_{i, a}$. Since $v_{i, a}$ is the first 
delayed subtask, doing this does not violate the precedence constraints for other subtasks. 
We iteratively perform that operation for other subtasks in $\mathcal{Q'}_i$ in increasing time order. 
The solution $\pi'$ constructed in this way yields a larger carry-out workload since more workload 
from individual subtasks can fit in the carry-out window. Therefore $\pi'$ is a better solution, which 
contradicts the assumption that $\pi^*$ is an optimal solution. 
\end{proof}

The workload contributed by a subtask $v_{i, a}$ is: \\
$W_{i, a} = \min\big\{ \max\{ \Delta^{CO} - S_{i, a}, 0 \}, X_{i, a} \big\}$. 
The second part of the outer minimization has been taken care of by Constraint~\ref{con:workload}. 
We now construct constraints to impose the first part of the minimization. 
Let $M_{i, a}$ be an integer variable representing the expression $\max\{ \Delta^{CO} - S_{i, a}, 0 \}$. 
Let $A_{i, a}$ be a binary variable which takes value either 0 or 1. 
We have the following constraints. 
\begin{constraint}
\label{con:workload2}
For any interfering task $\tau_i$:\\
$$ \forall v_{i, a}\in V_i: W_{i, a} \leq M_{i, a}.$$
\end{constraint}

\begin{constraint}
\label{con:max1}
For any interfering task $\tau_i$:\\
$$ \forall v_{i, a}\in V_i: M_{i, a}\geq 0.$$
\end{constraint}

\begin{constraint}
\label{con:max2}
For any interfering task $\tau_i$:\\
$$ \forall v_{i, a}\in V_i: M_{i, a}\leq (\Delta^{CO} - S_{i, a}) A_{i, a}.$$
\end{constraint}

Constraints~\ref{con:max1} and~\ref{con:max2} bound the value for $M_{i, a}$ and 
Constraint~\ref{con:workload2} enforces another upper bound for the workload $W_{i, a}$. 
If $\Delta^{CO} < S_{i, a}$, $A_{i, a}$ can only be 0 in order to satisfy both 
Contraints~\ref{con:max1} and~\ref{con:max2}. If $\Delta^{CO} = S_{i, a}$, the value 
of $A_{i, a}$ does not matter. In both cases, these three constraints together with 
Constraint~\ref{con:workload} bound $W_{i, a}$ to zero contribution of $v_{i, a}$ to the carry-out 
workload. If $\Delta^{CO} > S_{i, a}$, the maximizing process enforces that $A_{i, a}$ takes 
value 1. Therefore in any case Constraints~\ref{con:workload},~\ref{con:workload2},~\ref{con:max1}, 
and~\ref{con:max2} enforce a correct value for the workload contribution $W_{i, a}$ of $v_{i, a}$.

We have constructed an ILP with a quadratic constraint (Constraint~\ref{con:max2}) 
for each $v_{i, a}$, for which the optimal solution value is an upper bound for the carry-out workload. The 
carry-out workload of $\tau_i$ in a carry-out window of length $\Delta^{CO}$ can also 
be upper-bounded by the following straightforward lemma. 
\begin{lemma}
\label{lem:carryout_bound}
The carry-out workload of an interfering task $\tau_i$ scheduled by G-FP in a carry-out 
window of length $\Delta^{CO}$ is upper-bounded by $m\Delta^{CO}$.
\end{lemma}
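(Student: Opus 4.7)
The plan is to observe that this lemma is a direct capacity argument on the processors available within the carry-out window, independent of the DAG structure or the schedule. Since G-FP schedules onto $m$ identical processors, at any instant in time the aggregate processor time consumed by any set of jobs during an interval of length $\Delta^{CO}$ cannot exceed $m \cdot \Delta^{CO}$, because each processor contributes at most one unit of processing time per unit of real time.

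First I would recall the definition of the carry-out window and of $W_i^{CO}(\Delta^{CO})$: the carry-out workload is the cumulative processor time executed by subtasks of the carry-out job of $\tau_i$ during the carry-out window. Next I would note that this cumulative processor time is necessarily bounded above by the total processor time available in any interval of length $\Delta^{CO}$ on an $m$-processor platform, which is exactly $m\Delta^{CO}$. The bound follows immediately.

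There is no real obstacle here; the statement is essentially a trivial capacity bound that serves as a complementary upper bound to the ILP-based value $obj(\tau_i, \Delta^{CO})$, and will be used in conjunction with it (by taking the minimum of the two) in the response-time analysis of Section~\ref{sec:rta_schedtest}. The only thing worth being careful about is to emphasize that the bound holds regardless of the specific execution times or start times chosen for the subtasks, so that it can be safely combined with the ILP bound.
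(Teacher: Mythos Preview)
Your proposal is correct and matches the paper's own justification essentially verbatim: the paper simply states that the lemma follows directly from the fact that the carry-out job can execute at most on all $m$ processors during the carry-out window. Your additional remarks about combining this bound with the ILP value via a minimum are also exactly how the paper proceeds in Theorem~\ref{thm:carryout}.
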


Lemma~\ref{lem:carryout_bound} follows directly from the fact that the carry-out job 
can execute at most on all $m$ processors of the system during the carry-out window. 
Since the carry-out workload of $\tau_i$ is upper-bounded by both the maximum value 
returned for the optimization problem and Lemma~\ref{lem:carryout_bound}, 
it is upper-bounded by the minimum of the two quantities. 
\begin{theorem}
\label{thm:carryout}
The carry-out workload of an interfering task $\tau_i$ scheduled by G-FP in a carry-out 
window of length $\Delta^{CO}$ is upper-bounded by: 
$\min\Big\{ \mathcal{OBJ}, m\Delta^{CO} \Big\}$, where 
$\mathcal{OBJ}$ is the maximum value returned for the maximization problem (Equation~\ref{eqn:objective}).
\end{theorem}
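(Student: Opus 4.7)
The plan is to combine two previously established upper bounds on the same quantity. I would first argue that $\mathcal{OBJ}$ is itself a valid upper bound on the carry-out workload, and then note that $m\Delta^{CO}$ is a valid upper bound by Lemma~\ref{lem:carryout_bound}. Once both are in hand, the minimum of two upper bounds is clearly still an upper bound, and the theorem follows immediately.

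The non-trivial piece is verifying that $\mathcal{OBJ}$ upper-bounds the carry-out workload, but essentially all of the work has already been done in the preceding lemmas and constraints. The argument I would give has two layers. First, Lemma~\ref{lem:asap} establishes that for any fixed choice of subtask execution times, the workload generated by $\mathcal{SCHE}^{CO}(\tau_i)$ dominates the workload generated by any other schedule of the carry-out job with those same execution times. So it suffices to upper-bound the workload of $\mathcal{SCHE}^{CO}(\tau_i)$ over all possible execution-time assignments. Second, I would show that the ILP exactly encodes this family of schedules: Constraint~\ref{con:exetime} lets $X_{i,a}$ range over all admissible execution times in $[0, C_{i,a}]$; Constraints~\ref{con:dist_smaller} and~\ref{con:dist_larger} force $D^p_{i,a}$ to equal the cumulative execution time along path $\sigma^p_{i,a}$; Constraint~\ref{con:starttime}, together with the shift-left argument in its proof, forces $S_{i,a}$ to equal $\max_{p} D^p_{i,a}$, which by Lemma~\ref{lem:starting_time2} is the as-soon-as-ready start time in $\mathcal{SCHE}^{CO}(\tau_i)$; and Constraints~\ref{con:workload},~\ref{con:workload2},~\ref{con:max1}, and~\ref{con:max2} enforce $W_{i,a} = \min\{\max(\Delta^{CO} - S_{i,a}, 0),\, X_{i,a}\}$, which is exactly the portion of $v_{i,a}$'s execution that falls inside the carry-out window. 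Hence every feasible assignment corresponds to a schedule in $\mathcal{SCHE}^{CO}(\tau_i)$ for some execution times, and $\mathcal{OBJ}$ upper-bounds the workload for every such realization.

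The main obstacle, to the extent that there is one, is stitching together the preceding constraints into a clean claim that the ILP is a faithful relaxation rather than a tight encoding; in particular one must be careful about subtasks whose earliest ready time lies at or beyond $\Delta^{CO}$, whose workload contribution must collapse to zero. Constraints~\ref{con:max1} and~\ref{con:max2} together with the binary selector $A_{i,a}$ already handle this case (as discussed right after Constraint~\ref{con:max2}), so I would simply cite that discussion. With $\mathcal{OBJ}$ and $m\Delta^{CO}$ both justified as upper bounds, the theorem reduces to a one-line observation that the minimum of two upper bounds is an upper bound.
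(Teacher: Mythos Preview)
Your proposal is correct and matches the paper's approach exactly: the paper's justification for the theorem is the single sentence preceding it, namely that the carry-out workload is upper-bounded by both $\mathcal{OBJ}$ (by construction of the ILP via Lemma~\ref{lem:asap} and Constraints~\ref{con:exetime}--\ref{con:max2}) and by $m\Delta^{CO}$ (Lemma~\ref{lem:carryout_bound}), hence by their minimum. Your write-up simply spells out in more detail what the paper leaves implicit in the buildup of Section~\ref{sec:carryout}.
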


As discussed in Section~\ref{sec:sota}, the technique proposed by Fonseca et al.~\cite{fonseca2017improved} 
can be applied directly for NFJ-DAGs but not for general DAGs. For a general DAG, the procedure to transform 
the general DAG to an NFJ-DAG will likely inflate the carry-out workload bound as it removes some precedence 
constraints between subtasks and enables a higher parallelism (and thus a greater interfering workload) 
for the carry-out job. In contrast, our method 
directly bounds the carry-out workload for any DAG and the optimal value obtained is the actual 
maximum carry-out workload. Hence, our method theoretically yields better schedulability 
than~\cite{fonseca2017improved}'s for general DAGs. 
The cost of our method is higher time complexity for computing carry-out 
workload due to the hardness of the ILP problem. However, it can be implemented and works effectively 
with modern optimization solvers, as we show in our experiments (Section~\ref{sec:evaluation}).

\section{Response-Time Analysis}
\label{sec:rta_schedtest}

\begin{algorithm}[ht]
\caption{Response-Time Analysis}
\label{algo:sched_test}
\begin{algorithmic}[1]

\Procedure{SchedulabilityTest}{$\tau$}
%\Comment{\parbox[t]{0.5\linewidth}{Without loss of generality, assuming tasks are sorted in decreasing order
%of priority}}
\Comment{Without loss of generality, assuming tasks are sorted in decreasing order of priority}
%\item[]
\For{Each $\tau_k\in\tau$} \label{alg:ln:initstart}
	\Comment{\parbox[t]{.5\linewidth}{Initialize the values for response-time bounds}}
	\State $R_k^{ub} \leftarrow L_k + \frac{1}{m}(C_k - L_k)$
	\If{Any $R_k^{ub} > D_k$}
		\State Return \emph{Unschedulable}
	\EndIf
\EndFor \label{alg:ln:initend}
\item[]

\For{$\tau_k$ from $\tau_2$ to $\tau_n$}
	\State Calculate $R_k^{ub}$ in Theorem~\ref{thm:responsetime_bound} \label{alg:ln:bound}
	\If{$R_k^{ub} > D_k$} \label{alg:ln:checkstart}
		\State Return \emph{Unschedulable}
	\EndIf \label{alg:ln:checkend}
\EndFor

\State Return \emph{Schedulable} \label{alg:ln:ok}
\EndProcedure

\end{algorithmic}
\end{algorithm}

From the above calculations for the bounds of intra-task interference and inter-task interference 
on $\tau_k$, we have the following theorem for the response-time bound of $\tau_k$.
\begin{theorem}
\label{thm:responsetime_bound}
A constrained-deadline task $\tau_k$ scheduled by a global fixed-priority algorithm has response-time 
upper-bounded by the smallest integer $R_k^{ub}$ that satisfies the following fixed-point iteration: \\
$$R_k^{ub} \leftarrow L_k + \frac{1}{m}(C_k - L_k) + \frac{1}{m}\sum\limits_{\tau_i\in hp(\tau_k)} W_i(R_k^{ub}).$$
\end{theorem}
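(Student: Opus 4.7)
The plan is to derive a pre-fixed-point inequality $R_k \leq f(R_k)$, where $f(x) := L_k + \frac{1}{m}(C_k - L_k) + \frac{1}{m}\sum_{\tau_i \in hp(\tau_k)} W_i(x)$, and then to invoke the standard monotone fixed-point iteration argument of response-time analysis to conclude that the smallest integer solution of $x = f(x)$ upper-bounds the worst-case response time of any job of $\tau_k$.

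First, I would begin with Equation~\ref{eqn:resptime}, which decomposes the response time of an arbitrary job $J_k$ into the intra-task piece $len(\lambda_k) + \frac{1}{m}I_{k,k}(r_k, r_k+R_k)$ and the inter-task piece $\frac{1}{m}\sum_{\tau_i \in hp(\tau_k)} I_{i,k}(r_k, r_k+R_k)$. Lemma~\ref{lem:intra_interference} bounds the intra-task piece by $L_k + \frac{1}{m}(C_k - L_k)$. For each higher-priority task $\tau_i$, the chain of inequalities in Equation~\ref{eqn:workload_relation} bounds $I_{i,k}(r_k, r_k+R_k)$ by $W_i(R_k)$. Substituting both bounds into Equation~\ref{eqn:resptime} immediately yields $R_k \leq f(R_k)$.

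Next I would verify that $f$ is monotonically non-decreasing. Each component bound $W_i^{BO}$, $W_i^{CI}$, and $W_i^{CO}$ is non-decreasing in its respective window length (a longer window can only admit more workload); hence, via the maximization in Equation~\ref{eqn:max_workload}, $W_i(\Delta)$ is non-decreasing in $\Delta$, and therefore $f$ is as well. Initializing the iteration at $R_k^{(0)} := L_k + \frac{1}{m}(C_k - L_k)$ as in Algorithm~\ref{algo:sched_test} and using $W_i \geq 0$ to get $f(R_k^{(0)}) \geq R_k^{(0)}$, monotonicity makes the integer sequence $R_k^{(n+1)} := \lceil f(R_k^{(n)}) \rceil$ non-decreasing. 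It therefore either stabilizes at the smallest integer fixed point $R_k^{ub} \geq R_k^{(0)}$, or eventually exceeds $D_k$, in which case Algorithm~\ref{algo:sched_test} correctly reports \emph{Unschedulable}.

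The main obstacle is closing the gap between the pre-fixed-point inequality $R_k \leq f(R_k)$ and the target conclusion $R_k \leq R_k^{ub}$: this is the standard monotone-iteration step used throughout the response-time-analysis literature (see, e.g.,~\cite{bertogna2007response, melani2015response}). The argument proceeds by a semantic induction: $R_k^{(0)}$ coincides with the stand-alone response-time bound from Lemma~\ref{lem:intra_interference}, which is always a valid over-approximation of $R_k$ when no higher-priority interference is accounted for; assuming $R_k^{(n)}$ is a valid over-approximation of any response time that is consistent with an assumed interference window of length at most $R_k^{(n)}$, monotonicity of $f$ ensures the same holds for $R_k^{(n+1)} = \lceil f(R_k^{(n)}) \rceil$. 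At the first value where the assumed and computed windows coincide, namely $R_k^{ub}$, the invariant yields $R_k \leq R_k^{ub}$, completing the proof. The delicate point to check carefully is the monotone-induction invariant itself, since $W_i(\cdot)$ only tightly approximates the true interference at the actual response time; but monotonicity of $W_i$ guarantees that the inequality direction is preserved throughout the iteration.
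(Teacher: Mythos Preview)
Your proposal is correct and follows essentially the same approach as the paper: both derive the bound from Equation~\ref{eqn:resptime}, Lemma~\ref{lem:intra_interference}, and Equation~\ref{eqn:workload_relation}. The paper's own proof is in fact a one-sentence citation of exactly these three ingredients, whereas you additionally spell out the monotonicity and fixed-point-iteration argument that the paper leaves implicit.
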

\begin{proof}
This follows from Equation~\ref{eqn:resptime}, Lemma~\ref{lem:intra_interference} and the fact that the 
inter-task interference of $\tau_i$ on $\tau_k$ is bounded by the workload generated by $\tau_i$ 
(Equation~\ref{eqn:workload_relation}). 
\end{proof}

In Theorem~\ref{thm:responsetime_bound}, $W_i(R_k^{ub})$ is computed using Equation~\ref{eqn:max_workload} 
for all carry-in and carry-out windows that satisfy Equation~\ref{eqn:ci_co_length}. 
For specific carry-in and carry-out window lengths, the carry-in workload is bounded using 
Equation~\ref{eqn:carryin} and the carry-out workload is bounded as discussed in Section~\ref{sec:carryout}. 
The lengths for carry-in window $\Delta_i^{CI}$ and carry-out window 
$\Delta_i^{CO}$ are varied as follows. 
Let $\Gamma$ denote the right-hand side of Equation~\ref{eqn:ci_co_length}. 
First $\Delta_i^{CI}$ takes its largest value: $\Delta_i^{CI}\leftarrow \min\{\Gamma, L_i\}$, and $\Delta_i^{CO}$ takes 
the remaining sum: $\Delta_i^{CO}\leftarrow \min\{\Gamma-\Delta_i^{CI}, L_i\}$. Then in each subsequent step, 
$\Delta_i^{CI}$ is decreased and $\Delta_i^{CO}$ is increased until $\Delta_i^{CO}$ takes its largest value and 
$\Delta_i^{CI}$ takes the remaining value. 
We note that if at the first step both $\Delta_i^{CI}$ and $\Delta_i^{CO}$ are greater than or equal to 
$L_i$, the carry-in workload and carry-out workload are bounded by $\min(C_i, m\Delta_i^{CI})$ and 
 $\min(C_i, m\Delta_i^{CO})$, respectively. 
Similarly, if the sum of $\Delta_i^{CI}$ and $\Delta_i^{CO}$ is 0 in Equation~\ref{eqn:ci_co_length}, both the 
carry-in workload and the carry-out workload are 0. 
We also note that for the highest priority task, there is no interference from any other task, and thus its 
response-time bound can be computed simply by: $R_k^{ub} \leftarrow\big(L_k + \frac{1}{m}(C_k - L_k)\big)$. 

Using the above response-time bound, we derive a schedulability test, shown in Algorithm~\ref{algo:sched_test}. 
First we initialize the response-times for the tasks to be $\big(L_k + \frac{C_k - L_k}{m}\big)$ for all tasks $\tau_k$. 
If for any task, the initial response-time is larger than its relative deadline, then the task set is deemed 
unschedulable (lines~\ref{alg:ln:initstart}-\ref{alg:ln:initend}). 
Otherwise, we repeatedly compute the response-time bound for each task in descending order of 
priority using the fixed-point iteration in Theorem~\ref{thm:responsetime_bound} (line~\ref{alg:ln:bound}). 
After the computation for each task finishes, we check whether the response-time bound is larger than its deadline. 
If it is, then the task set is deemed unschedulable (lines~\ref{alg:ln:checkstart}-\ref{alg:ln:checkend}). 
Otherwise, the task set is deemed schedulable after all tasks have been checked (line~\ref{alg:ln:ok}). 

As expected for response-time analysis, for each task $\tau_i$ the number of iterations in the fixed-point equation 
(Theorem~\ref{thm:responsetime_bound}) is pseudo-polynomial in the task's deadline $D_i$ 
(line~\ref{alg:ln:bound}). In each iteration of the fixed-point equation and for each interfering task, 
we consider all combinations of carry-in and carry-out window lengths that satisfy Equation~\ref{eqn:ci_co_length} 
to compute the maximum interfering workload. There are $\mathcal{O}(L_i)$ such combinations, and thus the 
ILP for the carry-out workload is solved $\mathcal{O}(L_i)$ times. 
The maximum workload over all combinations of carry-in and carry-out window lengths gives an upper-bound 
for the interfering workload generated by the given interfering task.

\section{Evaluation}
\label{sec:evaluation}

\begin{figure*}[t]
\centering
\begin{subfigure}{0.45\textwidth}
  \centering
  \includegraphics[width=\linewidth]{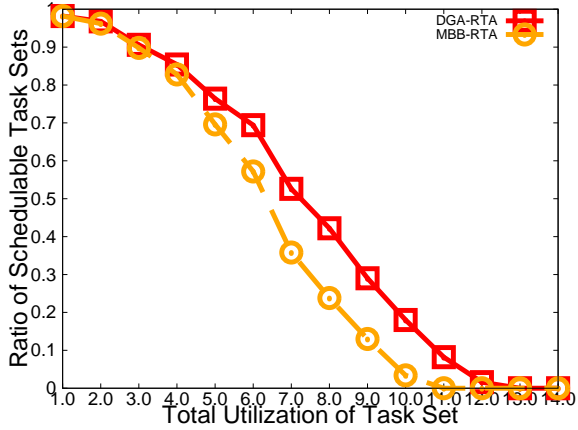}
  \caption{Result for $m=16$, minimum task utilization $\beta = 0.2$, and varying total utilization.}
  \label{fig:varyutil1}
\end{subfigure}
\hfill
\begin{subfigure}{.45\textwidth}
  \centering
  \includegraphics[width=\linewidth]{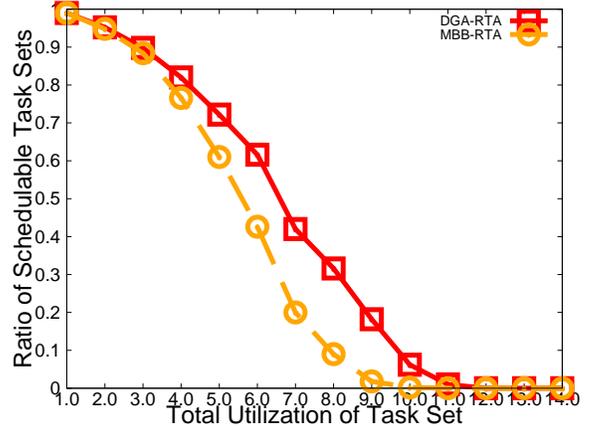}
  \caption{Result for $m=16$, minimum task utilization $\beta = 0.4$, and varying total utilization.}
  \label{fig:varyutil2}
\end{subfigure}

\begin{subfigure}{0.45\textwidth}
  \centering
  \includegraphics[width=\linewidth]{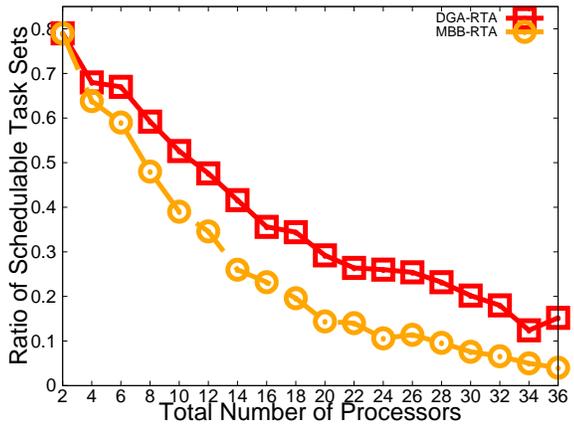}
  \caption{Result for total utilization $U = 0.5m$, minimum utilization $\beta=0.1$, and varying $m$.}
  \label{fig:varyprocs1}
\end{subfigure}
\hfill
\begin{subfigure}{.45\textwidth}
  \centering
  \includegraphics[width=\linewidth]{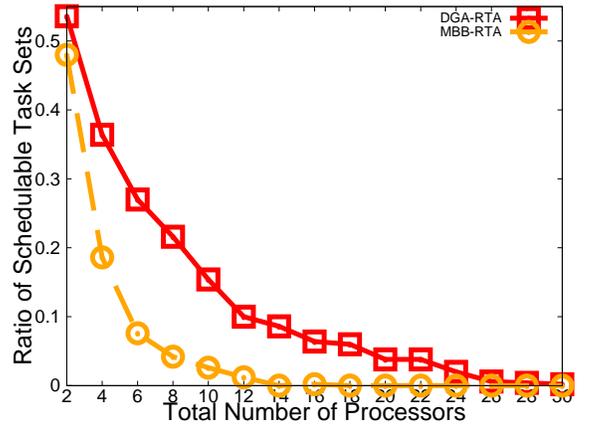}
  \caption{Result for total utilization $U = 0.7m$, minimum utilization $\beta=0.1$, and varying $m$.}
  \label{fig:varyprocs2}
\end{subfigure}

\caption{Ratio of schedulable task sets for varying total utilization and varying number of processors.}
\label{fig:eval}
\end{figure*}

As we discussed in Sections~\ref{sec:sota} and~\ref{sec:carryout}, we apply a similar, high-level framework for 
analyzing schedulability of G-FP scheduling to the one used by Fonseca et al.~\cite{fonseca2017improved} 
--- i.e., accounting for the interfering workloads caused by the body jobs, the carry-in and carry-out jobs 
separately, and maximizing the interference by sliding the problem window. 
However, unlike~\cite{fonseca2017improved} our technique for bounding carry-out workload works 
directly for general DAGs and does not introduce pessimism due to the removal of precedence constraints 
between subtasks, as presented in~\cite{fonseca2017improved}. Though for carry-in workload, we reuse the result 
from~\cite{fonseca2017improved}. Hence, we consider our work as a generalization/extension 
of~\cite{fonseca2017improved} that can be applied for general sporadic DAG tasks. 
The performance of our method in term of schedulability ratio is compatible with~\cite{fonseca2017improved}'s 
--- it theoretically is at least as good as~\cite{fonseca2017improved} for NFJ-DAGs and is better 
than~\cite{fonseca2017improved} for non NFJ-DAGs. We thus focus on measuring the performance of 
our method and use the work by Melani et al.~\cite{melani2015response} as a reference for evaluating 
the improvement of our method upon their simple one. 

We applied the Erd\H{o}s-R\'enyi $G(n, p)$ method, described in~\cite{cordeiro2010random}, to generate 
DAG tasks. In this method the number of subtasks, given by parameter $n$ in $G(n, p)$, is first fixed. 
Then, directed edges between pairs of vertices are added with probability $p$. Since the obtained DAG may not 
necessarily be connected, we added a minimum number of edges to make it weakly connected. 
In our experiments, the probability for a directed edge to be added is $p=0.2$. 
We chose the number of subtasks uniformly in the range $[10, 20]$. 
Other parameters for each DAG task $\tau_i$ were generated similarly to~\cite{melani2015response}. 
In particular, the WCETs of subtasks of $\tau_i$ were generated uniformly in the range $[1, 100]$. 
After that, the work $C_i$ and span $L_i$ were calculated. $\tau_i$'s utilization was generated uniformly 
in the range $[\beta, C_i/L_i]$, where $\beta\leq 1$ is a parameter to control the minimum task's utilization and 
$C_i/L_i$ represents the degree of parallelism of task $\tau_i$. $\tau_i$'s deadline $D_i$ was 
generated using a normal distribution with mean equal to $(\frac{T_i+L_i}{2})$ and 
standard deviation equal to $(\frac{T_i-L_i}{4})$. We kept generating the relative deadline until a value in 
the range $[L_i, T_i]$ was obtained. 

To generate a task set for a given total utilization, we repeatedly add DAG tasks to the task set until the 
desired utilization is reached. The utilization (and period) of the last task may need to be adjusted to 
match the total utilization. We used the SCIP solver~\cite{scip} with CPLEX~\cite{cplex} as 
its underlying LP-solver to compute the bound for carry-out workload. 
For our experiments, we set the default minimum utilization of individual tasks $\beta$ to $0.1$. 
For each configuration we generated 
500 task sets and recorded the ratios of task sets that were deemed schedulable. We compare our response-time 
analysis, denoted by DGA-RTA, with the response-time analysis introduced in~\cite{melani2015response}, 
denoted by MBB-RTA. 
For all generated task sets, priorities were assigned in Deadline Monotonic order 
--- studying an efficient priority assignment scheme for G-FP is beyond the scope of this paper. 

Figures~\ref{fig:varyutil1},~\ref{fig:varyutil2},~\ref{fig:varyprocs1}, and~\ref{fig:varyprocs2} show representative 
results for our experiments. In Figure~\ref{fig:varyutil1} and~\ref{fig:varyutil2}, we fixed the total number 
of processors $m=16$ and varied the total utilization from 1.0 to 14.0. The minimum task utilization $\beta$ was 
set to $0.2$ and $0.4$ in these two experiments, respectively. Unsurprisingly, DGA-RTA dominates MBB-RTA, 
as also observed in~\cite{fonseca2017improved}. Notably, its schedulability ratios for some configurations are 
two times or more greater than MBB-RTA, e.g., for total utilizations of 8.0, 9.0 in Figure~\ref{fig:varyutil1}, and 
7.0, 8.0 in Figure~\ref{fig:varyutil2}. 
In Figures~\ref{fig:varyprocs1} and~\ref{fig:varyprocs2}, we fixed the normalized total utilization and 
varied the number of processors $m$ from 2 to 36. 
For each value of $m$, we generated task sets with total utilization $U = 0.5m$ or $U = 0.7m$ for these two 
experiments, respectively. Similar to the previous experiments, the schedulability ratios of the generated 
task sets were improved significantly using DGA-RTA compared to MBB-RTA. 

To provide a trade-off between computational complexity and accuracy of schedulability test, 
one can employ our analysis in combination with the analysis 
presented in~\cite{fonseca2017improved} by first applying their response-time analysis and then using our 
analysis if the task set is deemed unschedulable by~\cite{fonseca2017improved}. In this way, one can get 
the best result from both analyses.

\section{Conclusion}
\label{sec:conclusion}

In this paper we consider constrained-deadline, parallel DAG tasks scheduled under 
a preemptive, G-FP scheduling algorithm on multiprocessor platforms. 
We propose a new technique for bounding carry-out workload 
of interfering task by converting the calculation of the bound to an optimization problem, 
for which efficient solvers exist. The proposed technique applies directly to general DAG tasks. 
The optimal solution value for the 
optimization problem serves as a safe and tight upper bound for carry-out workload. 
We present a response-time analysis for G-FP based on the proposed workload bounding technique. 
Experimental results affirm the dominance of the proposed approach over existing techniques. 
There are a couple of open questions that we would like to address in future. They include 
bounding carry-in and carry-out workloads for the actual number of processors $m$ of the system and 
designing an efficient priority assignment scheme for parallel DAG tasks scheduled under G-FP algorithm.

\bibliographystyle{ACM-Reference-Format}
\bibliography{ms}

%%% -*-BibTeX-*-
%%% Do NOT edit. File created by BibTeX with style
%%% ACM-Reference-Format-Journals [18-Jan-2012].

\begin{thebibliography}{25}

%%% ====================================================================
%%% NOTE TO THE USER: you can override these defaults by providing
%%% customized versions of any of these macros before the \bibliography
%%% command.  Each of them MUST provide its own final punctuation,
%%% except for \shownote{}, \showDOI{}, and \showURL{}.  The latter two
%%% do not use final punctuation, in order to avoid confusing it with
%%% the Web address.
%%%
%%% To suppress output of a particular field, define its macro to expand
%%% to an empty string, or better, \unskip, like this:
%%%
%%% \newcommand{\showDOI}[1]{\unskip}   % LaTeX syntax
%%%
%%% \def \showDOI #1{\unskip}           % plain TeX syntax
%%%
%%% ====================================================================

\ifx \showCODEN    \undefined \def \showCODEN     #1{\unskip}     \fi
\ifx \showDOI      \undefined \def \showDOI       #1{#1}\fi
\ifx \showISBNx    \undefined \def \showISBNx     #1{\unskip}     \fi
\ifx \showISBNxiii \undefined \def \showISBNxiii  #1{\unskip}     \fi
\ifx \showISSN     \undefined \def \showISSN      #1{\unskip}     \fi
\ifx \showLCCN     \undefined \def \showLCCN      #1{\unskip}     \fi
\ifx \shownote     \undefined \def \shownote      #1{#1}          \fi
\ifx \showarticletitle \undefined \def \showarticletitle #1{#1}   \fi
\ifx \showURL      \undefined \def \showURL       {\relax}        \fi
% The following commands are used for tagged output and should be
% invisible to TeX
\providecommand\bibfield[2]{#2}
\providecommand\bibinfo[2]{#2}
\providecommand\natexlab[1]{#1}
\providecommand\showeprint[2][]{arXiv:#2}

\bibitem[\protect\citeauthoryear{Axer, Quinton, Neukirchner, Ernst, D{\"o}bel,
  and H{\"a}rtig}{Axer et~al\mbox{.}}{2013}]%
        {axer2013response}
\bibfield{author}{\bibinfo{person}{Philip Axer}, \bibinfo{person}{Sophie
  Quinton}, \bibinfo{person}{Moritz Neukirchner}, \bibinfo{person}{Rolf Ernst},
  \bibinfo{person}{Bj{\"o}rn D{\"o}bel}, {and} \bibinfo{person}{Hermann
  H{\"a}rtig}.} \bibinfo{year}{2013}\natexlab{}.
\newblock \showarticletitle{Response-time analysis of parallel fork-join
  workloads with real-time constraints}. In \bibinfo{booktitle}{\emph{25th
  Euromicro Conference on Real-Time Systems, 2013.}} IEEE,
  \bibinfo{pages}{215--224}.
\newblock


\bibitem[\protect\citeauthoryear{Baker}{Baker}{2003}]%
        {baker2003multiprocessor}
\bibfield{author}{\bibinfo{person}{Theodore Baker}.}
  \bibinfo{year}{2003}\natexlab{}.
\newblock \showarticletitle{Multiprocessor EDF and deadline monotonic
  schedulability analysis}. In \bibinfo{booktitle}{\emph{24th Real-Time Systems
  Symposium, 2003.}} IEEE, \bibinfo{pages}{120--129}.
\newblock


\bibitem[\protect\citeauthoryear{Baruah, Bonifaci, Marchetti-Spaccamela,
  Stougie, and Wiese}{Baruah et~al\mbox{.}}{2012}]%
        {baruah2012generalized}
\bibfield{author}{\bibinfo{person}{Sanjoy Baruah}, \bibinfo{person}{Vincenzo
  Bonifaci}, \bibinfo{person}{Alberto Marchetti-Spaccamela},
  \bibinfo{person}{Leen Stougie}, {and} \bibinfo{person}{Andreas Wiese}.}
  \bibinfo{year}{2012}\natexlab{}.
\newblock \showarticletitle{A generalized parallel task model for recurrent
  real-time processes}. In \bibinfo{booktitle}{\emph{33rd Real-Time Systems
  Symposium, 2012.}} IEEE, \bibinfo{pages}{63--72}.
\newblock


\bibitem[\protect\citeauthoryear{Bertogna and Cirinei}{Bertogna and
  Cirinei}{2007}]%
        {bertogna2007response}
\bibfield{author}{\bibinfo{person}{Marko Bertogna} {and}
  \bibinfo{person}{Michele Cirinei}.} \bibinfo{year}{2007}\natexlab{}.
\newblock \showarticletitle{Response-time analysis for globally scheduled
  symmetric multiprocessor platforms}. In \bibinfo{booktitle}{\emph{28th
  Real-Time Systems Symposium, 2007}}. IEEE, \bibinfo{pages}{149--160}.
\newblock


\bibitem[\protect\citeauthoryear{Bertogna, Cirinei, and Lipari}{Bertogna
  et~al\mbox{.}}{2005}]%
        {bertogna2005improved}
\bibfield{author}{\bibinfo{person}{Marko Bertogna}, \bibinfo{person}{Michele
  Cirinei}, {and} \bibinfo{person}{Giuseppe Lipari}.}
  \bibinfo{year}{2005}\natexlab{}.
\newblock \showarticletitle{Improved schedulability analysis of EDF on
  multiprocessor platforms}. In \bibinfo{booktitle}{\emph{17th Euromicro
  Conference on Real-Time Systems, 2005}}. IEEE, \bibinfo{pages}{209--218}.
\newblock


\bibitem[\protect\citeauthoryear{Bertogna, Cirinei, and Lipari}{Bertogna
  et~al\mbox{.}}{2009}]%
        {bertogna2009schedulability}
\bibfield{author}{\bibinfo{person}{Marko Bertogna}, \bibinfo{person}{Michele
  Cirinei}, {and} \bibinfo{person}{Giuseppe Lipari}.}
  \bibinfo{year}{2009}\natexlab{}.
\newblock \showarticletitle{Schedulability analysis of global scheduling
  algorithms on multiprocessor platforms}.
\newblock \bibinfo{journal}{\emph{IEEE Transactions on parallel and distributed
  systems}} \bibinfo{volume}{20}, \bibinfo{number}{4} (\bibinfo{year}{2009}),
  \bibinfo{pages}{553--566}.
\newblock


\bibitem[\protect\citeauthoryear{Bonifaci, Marchetti-Spaccamela, Stiller, and
  Wiese}{Bonifaci et~al\mbox{.}}{2013}]%
        {bonifaci2013feasibility}
\bibfield{author}{\bibinfo{person}{Vincenzo Bonifaci}, \bibinfo{person}{Alberto
  Marchetti-Spaccamela}, \bibinfo{person}{Sebastian Stiller}, {and}
  \bibinfo{person}{Andreas Wiese}.} \bibinfo{year}{2013}\natexlab{}.
\newblock \showarticletitle{Feasibility analysis in the sporadic DAG task
  model}. In \bibinfo{booktitle}{\emph{25th Euromicro Conference on Real-Time
  Systems, 2013.}} IEEE, \bibinfo{pages}{225--233}.
\newblock


\bibitem[\protect\citeauthoryear{Chwa, Lee, Lee, Phan, Easwaran, and Shin}{Chwa
  et~al\mbox{.}}{2017}]%
        {chwa2017global}
\bibfield{author}{\bibinfo{person}{Hoon~Sung Chwa}, \bibinfo{person}{Jinkyu
  Lee}, \bibinfo{person}{Jiyeon Lee}, \bibinfo{person}{Kiew-My Phan},
  \bibinfo{person}{Arvind Easwaran}, {and} \bibinfo{person}{Insik Shin}.}
  \bibinfo{year}{2017}\natexlab{}.
\newblock \showarticletitle{Global EDF schedulability analysis for parallel
  tasks on multi-core platforms}.
\newblock \bibinfo{journal}{\emph{IEEE Transactions on Parallel and Distributed
  Systems}} \bibinfo{volume}{28}, \bibinfo{number}{5} (\bibinfo{year}{2017}),
  \bibinfo{pages}{1331--1345}.
\newblock


\bibitem[\protect\citeauthoryear{Chwa, Lee, Phan, Easwaran, and Shin}{Chwa
  et~al\mbox{.}}{2013}]%
        {chwa2013global}
\bibfield{author}{\bibinfo{person}{Hoon~Sung Chwa}, \bibinfo{person}{Jinkyu
  Lee}, \bibinfo{person}{Kieu-My Phan}, \bibinfo{person}{Arvind Easwaran},
  {and} \bibinfo{person}{Insik Shin}.} \bibinfo{year}{2013}\natexlab{}.
\newblock \showarticletitle{Global EDF schedulability analysis for synchronous
  parallel tasks on multicore platforms}. In \bibinfo{booktitle}{\emph{25th
  Euromicro Conference on Real-Time Systems, 2013.}} IEEE,
  \bibinfo{pages}{25--34}.
\newblock


\bibitem[\protect\citeauthoryear{Cordeiro, Mouni{\'e}, Perarnau, Trystram,
  Vincent, and Wagner}{Cordeiro et~al\mbox{.}}{2010}]%
        {cordeiro2010random}
\bibfield{author}{\bibinfo{person}{Daniel Cordeiro},
  \bibinfo{person}{Gr{\'e}gory Mouni{\'e}}, \bibinfo{person}{Swann Perarnau},
  \bibinfo{person}{Denis Trystram}, \bibinfo{person}{Jean-Marc Vincent}, {and}
  \bibinfo{person}{Fr{\'e}d{\'e}ric Wagner}.} \bibinfo{year}{2010}\natexlab{}.
\newblock \showarticletitle{Random graph generation for scheduling
  simulations}. In \bibinfo{booktitle}{\emph{Proceedings of the 3rd
  international ICST conference on simulation tools and techniques}}. ICST
  (Institute for Computer Sciences, Social-Informatics and Telecommunications
  Engineering), \bibinfo{pages}{60}.
\newblock


\bibitem[\protect\citeauthoryear{Ferry, Bunting, Maghareh, Prakash, Dyke,
  Agrawal, Gill, and Lu}{Ferry et~al\mbox{.}}{2014}]%
        {ferry2014real}
\bibfield{author}{\bibinfo{person}{David Ferry}, \bibinfo{person}{Gregory
  Bunting}, \bibinfo{person}{Amin Maghareh}, \bibinfo{person}{Arun Prakash},
  \bibinfo{person}{Shirley Dyke}, \bibinfo{person}{Kunal Agrawal},
  \bibinfo{person}{Chris Gill}, {and} \bibinfo{person}{Chenyang Lu}.}
  \bibinfo{year}{2014}\natexlab{}.
\newblock \showarticletitle{Real-time system support for hybrid structural
  simulation}. In \bibinfo{booktitle}{\emph{Proceedings of the 14th
  International Conference on Embedded Software}}. ACM, \bibinfo{pages}{1--10}.
\newblock


\bibitem[\protect\citeauthoryear{Fonseca, Nelissen, and N{\'e}lis}{Fonseca
  et~al\mbox{.}}{2017}]%
        {fonseca2017improved}
\bibfield{author}{\bibinfo{person}{Jos{\'e} Fonseca}, \bibinfo{person}{Geoffrey
  Nelissen}, {and} \bibinfo{person}{Vincent N{\'e}lis}.}
  \bibinfo{year}{2017}\natexlab{}.
\newblock \showarticletitle{Improved response time analysis of sporadic DAG
  tasks for global FP scheduling}. In \bibinfo{booktitle}{\emph{Proceedings of
  the 25th International Conference on Real-Time Networks and Systems}}. ACM,
  \bibinfo{pages}{28--37}.
\newblock


\bibitem[\protect\citeauthoryear{Frigo, Leiserson, and Randall}{Frigo
  et~al\mbox{.}}{1998}]%
        {frigo1998implementation}
\bibfield{author}{\bibinfo{person}{Matteo Frigo}, \bibinfo{person}{Charles~E
  Leiserson}, {and} \bibinfo{person}{Keith~H Randall}.}
  \bibinfo{year}{1998}\natexlab{}.
\newblock \showarticletitle{The implementation of the Cilk-5 multithreaded
  language}.
\newblock \bibinfo{journal}{\emph{ACM Sigplan Notices}} \bibinfo{volume}{33},
  \bibinfo{number}{5}, \bibinfo{pages}{212--223}.
\newblock


\bibitem[\protect\citeauthoryear{Guan, Stigge, Yi, and Yu}{Guan
  et~al\mbox{.}}{2009}]%
        {guan2009new}
\bibfield{author}{\bibinfo{person}{Nan Guan}, \bibinfo{person}{Martin Stigge},
  \bibinfo{person}{Wang Yi}, {and} \bibinfo{person}{Ge Yu}.}
  \bibinfo{year}{2009}\natexlab{}.
\newblock \showarticletitle{New response time bounds for fixed priority
  multiprocessor scheduling}. In \bibinfo{booktitle}{\emph{30th Real-Time
  Systems Symposium, 2009.}} IEEE, \bibinfo{pages}{387--397}.
\newblock


\bibitem[\protect\citeauthoryear{{Gurobi Solver}}{{Gurobi Solver}}{2019}]%
        {gurobi}
\bibfield{author}{\bibinfo{person}{{Gurobi Solver}}.}
  \bibinfo{year}{2019}\natexlab{}.
\newblock \bibinfo{howpublished}{\url{http://www.gurobi.com/index}}.
\newblock


\bibitem[\protect\citeauthoryear{{IBM ILOG CPLEX Optimizer}}{{IBM ILOG CPLEX
  Optimizer}}{2019}]%
        {cplex}
\bibfield{author}{\bibinfo{person}{{IBM ILOG CPLEX Optimizer}}.}
  \bibinfo{year}{2019}\natexlab{}.
\newblock
  \bibinfo{howpublished}{\url{https://www.ibm.com/analytics/cplex-optimizer}}.
\newblock


\bibitem[\protect\citeauthoryear{{Intel Cilk Plus}}{{Intel Cilk Plus}}{2019}]%
        {cilkplus}
\bibfield{author}{\bibinfo{person}{{Intel Cilk Plus}}.}
  \bibinfo{year}{2019}\natexlab{}.
\newblock \bibinfo{howpublished}{\url{https://www.cilkplus.org/}}.
\newblock


\bibitem[\protect\citeauthoryear{{Intel Threading Building Blocks}}{{Intel
  Threading Building Blocks}}{2019}]%
        {tbb}
\bibfield{author}{\bibinfo{person}{{Intel Threading Building Blocks}}.}
  \bibinfo{year}{2019}\natexlab{}.
\newblock
  \bibinfo{howpublished}{\url{https://www.threadingbuildingblocks.org/}}.
\newblock


\bibitem[\protect\citeauthoryear{Kim, Kim, Lakshmanan, and Rajkumar}{Kim
  et~al\mbox{.}}{2013}]%
        {kim2013parallel}
\bibfield{author}{\bibinfo{person}{Junsung Kim}, \bibinfo{person}{Hyoseung
  Kim}, \bibinfo{person}{Karthik Lakshmanan}, {and}
  \bibinfo{person}{Ragunathan~Raj Rajkumar}.} \bibinfo{year}{2013}\natexlab{}.
\newblock \showarticletitle{Parallel scheduling for cyber-physical systems:
  Analysis and case study on a self-driving car}. In
  \bibinfo{booktitle}{\emph{Proceedings of the ACM/IEEE 4th International
  Conference on Cyber-Physical Systems}}. ACM, \bibinfo{pages}{31--40}.
\newblock


\bibitem[\protect\citeauthoryear{Lakshmanan, Kato, and Rajkumar}{Lakshmanan
  et~al\mbox{.}}{2010}]%
        {lakshmanan2010scheduling}
\bibfield{author}{\bibinfo{person}{Karthik Lakshmanan},
  \bibinfo{person}{Shinpei Kato}, {and} \bibinfo{person}{Ragunathan~Raj
  Rajkumar}.} \bibinfo{year}{2010}\natexlab{}.
\newblock \showarticletitle{Scheduling parallel real-time tasks on multi-core
  processors}. In \bibinfo{booktitle}{\emph{31st IEEE Real-Time Systems
  Symposium, 2010.}} IEEE, \bibinfo{pages}{259--268}.
\newblock


\bibitem[\protect\citeauthoryear{Maia, Bertogna, Nogueira, and Pinho}{Maia
  et~al\mbox{.}}{2014}]%
        {maia2014response}
\bibfield{author}{\bibinfo{person}{Cl{\'a}udio Maia}, \bibinfo{person}{Marko
  Bertogna}, \bibinfo{person}{Lu{\'\i}s Nogueira}, {and}
  \bibinfo{person}{Luis~Miguel Pinho}.} \bibinfo{year}{2014}\natexlab{}.
\newblock \showarticletitle{Response-time analysis of synchronous parallel
  tasks in multiprocessor systems}. In \bibinfo{booktitle}{\emph{Proceedings of
  the 22nd International Conference on Real-Time Networks and Systems}}. ACM,
  \bibinfo{pages}{3}.
\newblock


\bibitem[\protect\citeauthoryear{Melani, Bertogna, Bonifaci,
  Marchetti-Spaccamela, and Buttazzo}{Melani et~al\mbox{.}}{2015}]%
        {melani2015response}
\bibfield{author}{\bibinfo{person}{Alessandra Melani}, \bibinfo{person}{Marko
  Bertogna}, \bibinfo{person}{Vincenzo Bonifaci}, \bibinfo{person}{Alberto
  Marchetti-Spaccamela}, {and} \bibinfo{person}{Giorgio~C Buttazzo}.}
  \bibinfo{year}{2015}\natexlab{}.
\newblock \showarticletitle{Response-time analysis of conditional DAG tasks in
  multiprocessor systems}. In \bibinfo{booktitle}{\emph{27th Euromicro
  Conference on Real-Time Systems, 2015.}} IEEE, \bibinfo{pages}{211--221}.
\newblock


\bibitem[\protect\citeauthoryear{{OpenMP}}{{OpenMP}}{2019}]%
        {openmp}
\bibfield{author}{\bibinfo{person}{{OpenMP}}.} \bibinfo{year}{2019}\natexlab{}.
\newblock \bibinfo{howpublished}{\url{https://www.openmp.org/}}.
\newblock


\bibitem[\protect\citeauthoryear{Saifullah, Li, Agrawal, Lu, and
  Gill}{Saifullah et~al\mbox{.}}{2013}]%
        {saifullah2013multi}
\bibfield{author}{\bibinfo{person}{Abusayeed Saifullah}, \bibinfo{person}{Jing
  Li}, \bibinfo{person}{Kunal Agrawal}, \bibinfo{person}{Chenyang Lu}, {and}
  \bibinfo{person}{Christopher Gill}.} \bibinfo{year}{2013}\natexlab{}.
\newblock \showarticletitle{Multi-core real-time scheduling for generalized
  parallel task models}.
\newblock \bibinfo{journal}{\emph{Real-Time Systems}} \bibinfo{volume}{49},
  \bibinfo{number}{4} (\bibinfo{year}{2013}), \bibinfo{pages}{404--435}.
\newblock


\bibitem[\protect\citeauthoryear{{SCIP Solver}}{{SCIP Solver}}{2019}]%
        {scip}
\bibfield{author}{\bibinfo{person}{{SCIP Solver}}.}
  \bibinfo{year}{2019}\natexlab{}.
\newblock \bibinfo{howpublished}{\url{http://scip.zib.de/}}.
\newblock


\end{thebibliography}

\end{document}